\newcommand{\Equal}{\hspace{-0.7mm}=\hspace{-0.7mm}}
\newcommand{\Add}{\hspace{-0.7mm}+\hspace{-0.7mm}}
\newcommand{\Minus}{\hspace{-0.7mm}-\hspace{-0.7mm}}
\newcommand{\Less}{\hspace{-0.7mm}<\hspace{-0.7mm}}
\newcommand{\Great}{\hspace{-0.7mm}>\hspace{-0.7mm}}
\newtheorem{theorem}{Theorem}
\newenvironment{proof}[1][Proof]{\begin{trivlist}
\item[\hskip \labelsep {\bfseries #1}]}{\end{trivlist}}
\newenvironment{remark}[1][Remark:]{\begin{trivlist}
\item[\hskip \labelsep {\bfseries #1}]}{\end{trivlist}}
\newcommand{\qed}{\nobreak \ifvmode \relax \else
      \ifdim\lastskip<1.5em \hskip-\lastskip
      \hskip1.5em plus0em minus0.5em \fi \nobreak
      \vrule height0.75em width0.5em depth0.25em\fi}
\begin{document}

\title{Adaptive Mode Selection in Bidirectional Buffer-aided Relay Networks with Fixed Transmit Powers 
}
\author{Vahid Jamali$^\dag$, Nikola Zlatanov$^\ddag$, Aissa Ikhlef$^\ddag$, and Robert Schober$^\dag$ \\
\IEEEauthorblockA{$^\dag$ Friedrich-Alexander University (FAU), Erlangen, Germany \\
 $^\ddag$ University of British Columbia (UBC), Vancouver, Canada}
}

\maketitle

\begin{abstract} 
We consider a bidirectional network in which two users exchange information with the help of a buffer-aided relay. In such a network without direct link between user 1 and user 2, there exist six possible transmission modes, i.e., four point-to-point modes (user 1-to-relay, user 2-to-relay, relay-to-user 1,  relay-to-user 2), a multiple access mode (both users to the relay), and  a broadcast mode (the relay to both users). Because of the buffering capability at the relay, the transmissions in the network are not restricted to adhere to a predefined schedule,  and  therefore, all the transmission modes in the bidirectional relay network can be used adaptively  based on the instantaneous channel state information (CSI) of the involved links. For the considered network, assuming fixed transmit powers for both the users and the relay, we derive the optimal transmission mode selection policy which maximizes the sum rate. The proposed policy selects one out of the six possible transmission modes in each time slot based on the instantaneous CSI. Simulation results confirm the effectiveness of the proposed protocol compared to existing protocols. 
\end{abstract}
%

\section{Introduction} \label{Sec I (Intro)}

In a bidirectional relay network, two users  share information via a relay node \cite{Tarokh}. Several protocols have been proposed for such a network under the practical half-duplex constraint, i.e., a node cannot transmit and receive at the same time and in the same frequency band. The simplest protocol is the traditional two-way protocol in which the transmission is accomplished in four successive point-to-point phases: user 1-to-relay, relay-to-user 2, user 2-to-relay, and relay-to-user 1. In contrast, the time division broadcast (TDBC) protocol exploits the broadcasting capability of the wireless medium  and combines the relay-to-user 1 and relay-to-user 2 phases into one phase, the broadcast phase \cite{TDBC}. Thereby, the relay broadcasts a superimposed codeword, carrying information for both user 1 and user 2, such that each user is able to recover its intended information by self-interference cancellation. To further increase the spectral efficiency, the multiple access broadcast (MABC) protocol was proposed in which the user 1-to-relay and user 2-to-relay phases are also combined into one phase, the  multiple-access phase \cite{MABC}. In the  multiple-access phase, user 1 and user 2 simultaneously transmit to the relay, which decodes both messages. For the bidirectional relay network without a direct link between user 1 and user 2, the capacity regions of all six possible transmission modes, i.e., the four point-to-point modes, the multiple access mode, and the broadcast mode, are known  \cite{BocheIT},\cite{Cover}. Using this knowledge,  a significant research effort has been dedicated to obtaining the achievable rate region and the capacity of the bidirectional relay network with and without fading
 \nocite{Tarokh,BocheIT,BochePIMRC,PopovskiICC,PopovskiLetter} \cite{Tarokh}-\cite{ PopovskiLetter}. Using this knowledge,  a significant research effort has been dedicated to obtaining the achievable rate region  of the bidirectional relay network
 \nocite{Tarokh,BocheIT,BochePIMRC,PopovskiICC,PopovskiLetter} \cite{Tarokh}-\cite{ PopovskiLetter}.  Specifically, the achievable rates of  most existing protocols for two-hop relay transmission are
limited by the instantaneous capacity of the weakest link associated with the relay.  The reason for this is the fixed schedule of using the transmission modes which is adoped in all existing protocols, and does not exploit the instantaneous channel state information (CSI) of the involved links.   For one-way relaying, an adaptive link selection protocol  was proposed   in \cite{NikolaJSAC} where based on the instantaneous CSI, in each time slot, either the source-relay link or the relay-destination link is selected for transmission. To this end, the relay has to have a buffer for data storage. This strategy was shown to achieve the capacity of the one-way relay channel with fading \cite{NikolaTIT}. 

Motivated by the   protocols in \cite{NikolaJSAC} and \cite{NikolaTIT}, our goal is to utilize all available degrees of freedom in the three-node half-duplex bidirectional relay network with fading, via a buffer-aided and adaptive mode selection protocol. In particular,   given the fixed transmit powers of all three nodes in the bidirectional relay network,  we find a policy which selects the optimal transmission mode from the six available modes in each time slot such that the sum rate is maximized. A similar problem was considered in \cite{PopovskiLetter}. However, the selection policy in \cite{PopovskiLetter} does not use all possible modes, i.e., it only selects between two point-to-point modes and the broadcast mode, and assumes that the transmit powers of all three nodes are identical. We will show that considering only the three modes in \cite{PopovskiLetter} is not optimal. In fact, the multiple access mode is selected  with non-zero probability in the proposed optimal selection policy. Finally, we note that the advantages of buffering come at the expense of an increased end-to-end delay. Therefore, the proposed protocol is suitable for applications that are not sensitive to delay. The delay analysis of the proposed protocol is beyond the scope of the current work and is left for future research.

The  remainder of  the  paper  is  organized as  follows. In Section \ref{SysMod}, the system model is presented. In Section \ref{AdapMod}, the average sum rate is investigated and the optimal mode selection policy is provided. Simulation results are presented in Section \ref{SimRes}. Finally, Section \ref{Conclusion} concludes the paper.

\section{System Model}\label{SysMod}
In this section, we first describe the channel model and review the achievable rates for the six possible  transmission modes.

\subsection{Channel Model}
\begin{figure}
\centering
\psfrag{U1}[c][c][0.75]{$\text{User 1}$}
\psfrag{U2}[c][c][0.75]{$\text{User 2}$}
\psfrag{R}[c][c][0.75]{$\text{Relay}$}
\psfrag{h1}[c][c][0.75]{$h_1(i)$}
\psfrag{h2}[c][c][0.75]{$h_2(i)$}
\psfrag{P1}[c][c][0.75]{$P_1(i)$}
\psfrag{P2}[c][c][0.75]{$P_2(i)$}
\psfrag{Pr}[c][c][0.75]{$P_r(i)$}
\psfrag{B1}[c][c][0.75]{$B_1$}
\psfrag{B2}[c][c][0.75]{$B_2$}
\includegraphics[width=3.5 in]{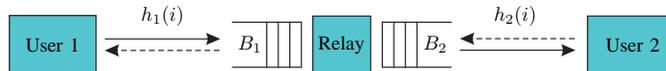}
\caption{Three-node bidirectional relay network consisting of two users and a relay.}
\label{FigSysMod}
\end{figure}
\begin{figure}
\centering
\psfrag{U1}[c][c][0.5]{$\text{User 1}$}
\psfrag{U2}[c][c][0.5]{$\text{User 2}$}
\psfrag{R}[c][c][0.5]{$\text{Relay}$}
\psfrag{M1}[c][c][0.75]{$\mathcal{M}_1$}
\psfrag{M2}[c][c][0.75]{$\mathcal{M}_2$}
\psfrag{M3}[c][c][0.75]{$\mathcal{M}_3$}
\psfrag{M4}[c][c][0.75]{$\mathcal{M}_4$}
\psfrag{M5}[c][c][0.75]{$\mathcal{M}_5$}
\psfrag{M6}[c][c][0.75]{$\mathcal{M}_6$}
\includegraphics[width=3.4 in]{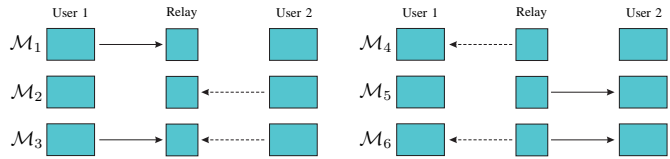}
\caption{The six possible transmission modes in the considered bidirectional relay network.}
\label{FigModes}
\end{figure}

We consider a simple network in which users 1 and 2 exchange  information with the help of a relay node, as shown in Fig. \ref{FigSysMod}. We assume that there is no direct link between user 1 and user 2, and thus, user 1 and user 2
communicate with each other only through the relay node. We assume that all three nodes in the network are half-duplex. Furthermore, we assume that  time is divided into slots of equal length and each node transmits codewords which  span one time slot except in the multiple access mode in which the nodes may transmit codewords which span a fraction of the time slot. The duration of one time slot is normalized to one. We assume that the  users-to-relay and relay-to-users channels are all impaired by AWGN having zero mean and unit variance, and block fading, i.e., the channel coefficients are constant during one time slot and change from one time slot to the next. Moreover, the channel coefficients are assumed to be reciprocal such that the user 1-to-relay and user 2-to-relay  channels are identical to the relay-to-user 1 and relay-to-user 2 channels in each time slot, respectively. Let $h_1(i)$ and $h_2(i)$ denote the channel coefficients between user 1 and the relay and between user 2 and the relay in the $i$-th time slot, respectively. Furthermore, let $S_1(i)=|h_1(i)|^2$ and $S_2(i)=|h_2(i)|^2$  denote the squares  of the channel coefficient amplitudes in the $i$-th time slot. $S_1(i)$ and $S_2(i)$   are assumed to be ergodic and stationary random processes with means $\Omega_1=E\{S_1\}$ and  $\Omega_2=E\{S_2\}$\footnote{In the paper, we drop time index $i$ in expectations for notational simplicity.}, respectively, where $E\{\cdot\}$ denotes expectation.  Since the noise is AWGN, in order to achieve the capacity in each mode, each  node has to  transmit codewords which are  Gaussian distributed. Therefore,  the transmitted codewords of user 1, user 2, and the relay are composed of symbols which are modeled as Gaussian   random variables with variances $P_1, P_2$, and $P_r$, respectively, where $P_j$ is the transmit power of node $j\in\{1,2,r\}$. For ease of notation, we define $C(x)\triangleq \log_2(1+x)$.  In the following, we introduce the transmission modes and their achievable rates.

\subsection{Transmission Modes and Their Achievable  Rates}

In the considered bidirectional relay network only six transmission modes are possible, cf. Fig. \ref{FigModes}. The six possible transmission modes are denoted by  ${\cal M}_1,...,{\cal M}_6$, and  $R_{jj'}(i)\geq 0, \,\,j,j'\in\{1,2,r\}$, denotes the transmission rate from node $j$ to node $j'$ in the $i$-th time slot. Let $B_1$ and $B_2$ denote two infinite-size buffers at the relay in which the received information from user 1 and user 2 is stored, respectively. Moreover, $Q_j(i), \,\, j\in\{1,2\}$ denotes as the amount of normalized information in bits/symbol available in buffer $B_j$ in the $i$-th time slot. Using this notation, the transmission modes and their respective rates are presented in the following:

${\cal M}_1$: User 1 transmits to the relay and user 2 is silent. In this mode,     the maximum rate from user 1 to the relay in  the $i$-th time slot is given by $R_{1r}(i)= C_{1r}(i)$, where $C_{1r}(i)=C(P_1S_1(i))$. The relay decodes this information and stores it in buffer $B_1$. Therefore, the amount of information in buffer $B_1$ increases to $Q_1(i)=Q_1(i-1)+R_{1r}(i)$.

${\cal M}_2$: User 2 transmits to the relay and user 1 is silent. In this mode,    the maximum rate from user 2 to the relay in  the $i$-th time slot is given by $R_{2r}(i)= C_{2r}(i)$, where $C_{2r}(i)=C(P_2S_2(i))$.  The relay decodes this information and stores it in buffer $B_2$. Therefore, the amount of information in buffer $B_2$ increases to $Q_2(i)=Q_2(i-1)+R_{2r}(i)$.

${\cal M}_3$: Both users 1 and 2 transmit to the relay  simultaneously. For this mode, we assume that multiple access transmission is used, see \cite{Cover}. Thereby, the maximum achievable sum rate in the $i$-th time slot is given by $R_{1r}(i)+R_{2r}(i)= C_{r}(i)$, where $C_{r}(i)=C(P_1S_1(i)+P_2S_2(i))$.  
 Since user 1 and user 2 transmit independent messages, the sum rate, $C_r(i)$, can be decomposed into two rates, one from user 1 to the relay and the other one from user 2 to the relay. Moreover, these two capacity rates can be achieved via time sharing and successive interference cancellation. Thereby, in the first $0\leq t(i)\leq 1$ fraction of the $i$-th time slot, the relay first decodes the
codeword received from user 2 and considers the signal from user 1 as noise. Then, the relay subtracts the signal component of user 2 from the received signal and decodes the
 codeword received from user 1. A similar procedure is performed in the remaining $1-t(i)$ fraction of the $i$-th time slot but now the relay first
decodes the codeword from user 1 and treats the signal of user 2 as noise, and then decodes the codeword received from user 2. Therefore, for a given $t(i)$, we decompose $C_r(i)$ as $C_r(i)=C_{12r}(i)+C_{21r}(i)$ and  the maximum rates from users 1 and 2 to the relay in the $i$-th time slot are $R_{1r}(i) = C_{12r}(i)$ and $R_{2r}(i)=C_{21r}(i)$, respectively. $C_{12r}(i)$ and $C_{21r}(i)$ are given by 
\begin{IEEEeqnarray}{ll}\label{Cr}
  C_{12r}(i)\hspace{-0.5mm}=\hspace{-0.5mm}t(i) C\left(P_1S_{1}(i)\right)\hspace{-1mm} + \hspace{-1mm}(1-t(i)) C\left(\frac{P_1S_{1}(i)}{1+P_2S_{2}(i)}\right)  \IEEEeqnarraynumspace\IEEEyesnumber\IEEEyessubnumber \\
   C_{21r}(i)\hspace{-0.5mm}=\hspace{-0.5mm}(1-t(i)) C\left(P_2S_{2}(i)\right) \hspace{-1mm} +  \hspace{-1mm} t(i) C\left(\frac{P_2S_{2}(i)}{1+P_1S_{1}(i)}\right) \IEEEyessubnumber 
\end{IEEEeqnarray}
The relay decodes the information received from user 1  and user 2  and stores it in its buffers $B_1$ and $B_2$, respectively. Therefore, the amount of information in buffers $B_1$ and $B_2$ increase to $Q_1(i)=Q_1(i-1)+R_{1r}(i)$ and $Q_2(i)=Q_2(i-1)+R_{2r}(i)$, respectively.

${\cal M}_4$: The relay transmits the information received  from user 2 to user 1. Specifically, the relay extracts the information from buffer $B_2$, encodes it into a codeword, and transmits it to user 1. Therefore, the transmission rate from the relay to user 1 in the $i$-th time slot is limited by both the capacity of the relay-to-user 1 channel and the amount of  information  stored in buffer $B_2$. Thus, the maximum transmission rate from the relay to user 1 is given by $R_{r1}(i) \Equal \min\{C_{r1}(i),Q_2(i-1)\}$, where $C_{r1}(i)=C(P_rS_1(i))$. Therefore, the amount of information in buffer $B_2$ decreases to $Q_2(i)\Equal Q_2(i\Minus 1)\Minus R_{r1}(i)$.

${\cal M}_5$: This mode is identical to ${\cal M}_4$ with user 1 and 2 switching places. The maximum transmission rate from the relay to user 2 is given by $R_{r2}(i) = \min\{C_{r2}(i),Q_1(i-1)\}$, where $C_{r2}(i)=C(P_rS_2(i))$ and the amount of information in buffer $B_1$ decreases to $Q_1(i)\Equal Q_1(i\Minus 1)\Minus R_{r2}(i)$.

${\cal M}_6$: The relay broadcasts to both user 1 and user 2 the information received from user 2 and user 1, respectively. Specifically, the relay extracts the information intended for user 2 from buffer $B_1$ and  the information intended for user 1 from buffer $B_2$. Then, based on the scheme in \cite{BocheIT}, it constructs a superimposed codeword which contains the information from both users and broadcasts it to both users. Thus, in the $i$-th time slot, the maximum rates from the relay to users 1 and 2 are given by  $R_{r1}(i) \hspace{-1mm}=\hspace{-1mm} \min\hspace{-0.2mm}\{C_{r1}(i),\hspace{-0.5mm}Q_2(i-1)\}$ and $R_{r2}(i)=\min\{C_{r2}(i),Q_1(i-1)\}$, respectively. Therefore, the amount of information in buffer $B_1$ and $B_2$ decrease to $Q_1(i)\Equal Q_1(i\Minus 1)\Minus R_{r2}(i)$ and $Q_2(i)\Equal Q_2(i\Minus 1)\Minus R_{r1}(i)$, respectively.

%
Our aim is to develop an  optimal selection policy which selects one of the six possible transmission modes in each time slot such that, given $P_1$, $P_2$, and $P_r$, the average sum rate of both users is maximized. 
To this end,  we introduce six binary variables  $q_k(i) \in\{0,1\}, \,\,k=1,...,6$, where  $q_k(i)$ indicates whether or not transmission mode  $\mathcal{M}_k$ is selected in the $i$-th time slot. In particular, $q_k(i)=1$ if  mode $\mathcal{M}_k$ is selected and $q_k(i)=0$ if it is not selected in the $i$-th time slot. Moreover, since in each time slot  only one of the six transmission modes can be selected, only one of the mode selection variables is equal to one and the others are zero, i.e., $\mathop \sum_{k = 1}^6 q_k(i)=1$ holds. 

 In the proposed framework, we assume that one node (e.g. the relay node) is responsible for deciding which transmission mode is selected based on the full CSI of both links and the proposed protocol, cf. Theorem~\ref{FixProt}. Then, it sends its decision to the other nodes and transmission in time slot $i$ begins. Furthermore, for each node to be able  to decode their intended messages  and adapt their transmission rates, all three nodes have to have full   CSI of both links.

\section{Adaptive Mode Selection}\label{AdapMod}

In this section, we first investigate the achievable average sum rate  of the network. Then, we formulate a maximization problem whose solution yields  the maximum average sum rate.

\subsection{Achievable Average Sum Rate}

We assume that user 1 and user 2 always have enough information to send in all time slots and that the number of time slots, $N$, satisfies $N\to \infty$. Therefore, using $q_k(i)$, the user 1-to-relay, user 2-to-relay, relay-to-user 1, and relay-to-user 2 average transmission rates, denoted by $\bar{R}_{1r}$, $\bar{R}_{2r}$, $\bar{R}_{r1}$, and $\bar{R}_{r2}$, respectively, are obtained as
\begin{IEEEeqnarray}{Cll}\label{RatReg123}
    \bar{R}_{1r} &= \underset{N\to \infty}{\lim} \frac{1}{N}\mathop \sum \limits_{i = 1}^N \left[ q_1(i)C_{1r}(i)+q_3(i)C_{12r}(i)\right] \IEEEyesnumber\IEEEyessubnumber \\
		\bar{R}_{2r} &=  \underset{N\to \infty}{\lim} \frac{1}{N}\mathop \sum \limits_{i = 1}^N \left[ q_2(i)C_{2r}(i)+q_3(i)C_{21r}(i)\right] \IEEEyessubnumber\\
    \bar{R}_{r1}\hspace{-1mm}  &=\hspace{-1mm} \underset{N\to \infty}{\lim}\frac{1}{N}\hspace{-1mm} \mathop \sum \limits_{i = 1}^N \left[ q_4(i)\hspace{-0.5mm} +\hspace{-0.5mm} q_6(i)\right]\hspace{-0.5mm} \min\{C_{r1}(i),Q_2(i-1)\} \hspace{-8mm} \IEEEeqnarraynumspace\IEEEyessubnumber \\
		\bar{R}_{r2}\hspace{-1mm}  &=\hspace{-1mm}   \underset{N\to \infty}{\lim}\frac{1}{N}\hspace{-1mm} \mathop \sum \limits_{i = 1}^N \left[ q_5(i)\hspace{-0.5mm} +\hspace{-0.5mm} q_6(i)\right]\hspace{-0.5mm} \min\{C_{r2}(i),Q_1(i-1)\}.  \,\, \hspace{-10mm}\IEEEeqnarraynumspace \IEEEyessubnumber
\end{IEEEeqnarray}
The average rate from user 1 to user 2 is the average rate that user 2 receives from the relay, i.e.,  $\bar{R}_{r2}$. Similarly, the average rate from user 2 to user 1 is the average rate that user 1 receives from the relay, i.e.,  $\bar{R}_{r1}$.
 In the following theorem, we introduce a useful condition for the queues in the buffers of the relay  leading to the optimal mode selection policy. 

\begin{theorem}[Optimal Queue Condition]\label{Queue}
\normalfont
$\,\,\,\, \text{The} \quad \text{maximum}$ sum rate, $\bar{R}_{r1}+\bar{R}_{r2}$, for the considered bidirectional relay network is obtained when  the queues in buffers $B_1$ and $B_2$ are at the edge of non-absorbtion. More precisely, the following conditions must hold for the maximum sum rate: 
\begin{IEEEeqnarray}{lll}\label{RatRegApp456-buffer} 
\bar{R}_{1r}=\bar{R}_{r2} = \underset{N\to \infty}{\lim}\frac{1}{N}\mathop \sum \limits_{i = 1}^N \left[ q_5(i)+q_6(i)\right]C_{r2}(i)  \IEEEyesnumber\IEEEeqnarraynumspace \IEEEyessubnumber \\  
\bar{R}_{2r}=\bar{R}_{r1}= \underset{N\to \infty}{\lim}\frac{1}{N}\mathop \sum \limits_{i = 1}^N \left[ q_4(i)+q_6(i)\right]C_{r1}(i) \IEEEeqnarraynumspace\IEEEyessubnumber 
\end{IEEEeqnarray}
where $\bar{R}_{1r}$ and $\bar{R}_{2r}$ are given by (\ref{RatReg123}a) and (\ref{RatReg123}b), respectively.
\end{theorem}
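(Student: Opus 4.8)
The plan is to analyze each buffer separately, derive two independent necessary upper bounds on the relay-to-user rate that drains it, and then argue that maximizing the sum rate forces both bounds to be attained with equality simultaneously, which is exactly the edge-of-non-absorption condition~(\ref{RatRegApp456-buffer}).

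First I would establish \emph{flow conservation} for buffer $B_1$. Collecting the buffer updates from modes $\mathcal{M}_1,\mathcal{M}_3$ (filling) and $\mathcal{M}_5,\mathcal{M}_6$ (draining) gives the recursion $Q_1(i) = Q_1(i-1) + q_1(i)C_{1r}(i) + q_3(i)C_{12r}(i) - [q_5(i)+q_6(i)]\min\{C_{r2}(i),Q_1(i-1)\}$. Telescoping from $i=1$ to $N$ makes $Q_1(N)-Q_1(0)$ equal to the total arrived minus the total departed information; since $Q_1(N)\ge 0$ and $Q_1(0)$ is finite, dividing by $N$ and letting $N\to\infty$ yields $\bar{R}_{r2}\le \bar{R}_{1r}$, and the symmetric computation on $B_2$ gives $\bar{R}_{r1}\le \bar{R}_{2r}$. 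The second bound is immediate from $\min\{C_{r2}(i),Q_1(i-1)\}\le C_{r2}(i)$: the drain rate cannot exceed the channel-limited rate $\tilde{R}_{r2}:=\lim_{N\to\infty}\frac1N\sum_{i=1}^N[q_5(i)+q_6(i)]C_{r2}(i)$, and likewise $\bar{R}_{r1}\le \tilde{R}_{r1}$.

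Next I would show that any sum-rate-maximizing policy satisfies the chain $\bar{R}_{1r}=\bar{R}_{r2}=\tilde{R}_{r2}$ (and symmetrically for $B_2$) by ruling out the two ways it can fail. If $\bar{R}_{1r}>\tilde{R}_{r2}$, buffer $B_1$ is over-supplied: its queue grows without bound, so the event $Q_1(i-1)<C_{r2}(i)$ has vanishing density, the $\min$ becomes inactive, and $\bar{R}_{r2}=\tilde{R}_{r2}$. The surplus arrival then wastes time slots spent in the filling modes $\mathcal{M}_1,\mathcal{M}_3$, which could instead drain or fill $B_2$; reallocating them lowers $\bar{R}_{1r}$ toward $\tilde{R}_{r2}$ without reducing $\bar{R}_{r2}$ and does not decrease the sum rate, so the balanced point is attained without loss of optimality. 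Conversely, if $\bar{R}_{1r}<\tilde{R}_{r2}$, the maximum service exceeds the supply, the buffer is repeatedly emptied, the $\min$ is active with positive density, and $\bar{R}_{r2}=\bar{R}_{1r}<\tilde{R}_{r2}$; here the relay-to-user-2 channel carries slack, so increasing $\bar{R}_{1r}$ strictly raises $\bar{R}_{r2}$ and the sum rate, contradicting optimality. Hence the maximizer satisfies $\bar{R}_{1r}=\tilde{R}_{r2}$, at which balanced point the queue can be run at the edge of non-absorption -- marginally stable yet almost never empty -- so that $\bar{R}_{r2}=\bar{R}_{1r}=\tilde{R}_{r2}$ holds at once, establishing~(\ref{RatRegApp456-buffer}a); the identical argument on $B_2$ gives~(\ref{RatRegApp456-buffer}b).

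I expect the main obstacle to be making the edge-of-non-absorption step rigorous: at the exactly balanced operating point $\bar{R}_{1r}=\tilde{R}_{r2}$ one must exhibit a queue behavior that is simultaneously stable (so the causality bound is tight, $\bar{R}_{r2}=\bar{R}_{1r}$) and non-empty with full density (so the channel bound is tight, $\bar{R}_{r2}=\tilde{R}_{r2}$), even though these two requirements pull in opposite directions. This calls for a careful marginal-stability / limiting analysis of the $\min$-governed recursion as $N\to\infty$, analogous to the non-absorbing-queue analysis used for one-way buffer-aided relaying. A secondary difficulty is verifying that the slot reallocation in the over- and under-supply cases respects the single-mode-per-slot constraint $\sum_{k=1}^{6}q_k(i)=1$ and the coupling introduced by the shared modes $\mathcal{M}_3$ and $\mathcal{M}_6$, so that rebalancing $B_1$ does not covertly degrade the rate draining $B_2$.
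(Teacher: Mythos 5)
Your opening steps are sound and coincide with the paper's own starting point: telescoping the queue recursion to get $\bar{R}_{r2}\leq\bar{R}_{1r}$, and combining it with $\min\{C_{r2}(i),Q_1(i-1)\}\leq C_{r2}(i)$, gives exactly the characterization $\bar{R}_{r2}=\min\{\bar{R}_{1r},\bar{C}_{r2}\}$ of (\ref{eq_2}) in Appendix \ref{AppQueue}, with your $\tilde{R}_{r2}$ being the paper's $\bar{C}_{r2}$. Also, the "edge of non-absorption" obstacle you single out as the main difficulty is less serious than you fear: the regime $\bar{R}_{1r}\geq\bar{C}_{r2}$ (equality included) is precisely the case covered by the one-way buffer-aided relaying result the paper invokes \cite{NikolaJSAC}, which yields $\bar{R}_{r2}=\bar{C}_{r2}$ there, so no new marginal-stability analysis of the $\min$-governed recursion is required.

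The genuine gap is in your contradiction step, which you run on each buffer separately. The claim "if $\bar{R}_{1r}<\tilde{R}_{r2}$ then increasing $\bar{R}_{1r}$ strictly raises $\bar{R}_{r2}$ and the sum rate" is not justified as stated: time slots are a shared resource, so raising $\bar{R}_{1r}$ means taking slots away from some other mode, and whether the net effect is positive depends on the joint state of both buffers and on which mode sets $I_k$ are nonempty. For instance, if $\mathcal{M}_5$ is never used, the extra filling slots must come from $\mathcal{M}_6$ — which also drains $B_2$, so the loss in $\bar{C}_{r1}$ (and hence possibly in $\bar{R}_{r1}$) can offset the gain — or from $\mathcal{M}_2$/$\mathcal{M}_4$, which unbalances the inflow/outflow of $B_2$. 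This is exactly why the paper's proof is a coupled case analysis over both $\min$ functions simultaneously (nine cases, four shown and four by symmetry), with each reallocation chosen to preserve one buffer's balance while improving the other: e.g., in its Case 3 it moves indices from $I_4$ to $I_2$ when $I_4\neq\varoslash$, and from $I_6$ to $I_5$ when $I_4=\varoslash$, and it must additionally split sub-cases on emptiness of the $I_k$ and — a point your plan omits entirely — on the time-sharing variable $t(i)$ of the multiple access mode, since moving slots out of $I_3$ is legitimate only after distinguishing $t(i)=1$ (where $C_{12r}(i)=C_{1r}(i)$ and the move is rate-neutral for $B_1$) from $t(i)\neq 1$ (where $C_{12r}(i)<C_{1r}(i)$ enables a strict improvement via $I_3\to I_1$ and $I_4$). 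You flag this coupling as a "secondary difficulty," but it is in fact the main content of the proof; without the explicit joint exchange moves, your two per-buffer contradictions do not go through.
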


\begin{proof}
Please refer to Appendix \ref{AppQueue}. 
\end{proof}
Using this theorem,  we now derive the optimal transmission mode selection policy.

\subsection{Optimal Mode Selection Protocol}

The available degrees of freedom in the considered network are the mode selection variables $q_{k}(i)$, $\forall  k,i$ and the time sharing variable in the multiple access mode, $t(i)$,  $\forall  i$. In the following, we   formulate a problem for optimization of $q_k(i),\,\,\forall k,i$ and $t(i),\,\,\forall i$ such that the average sum rate, $\bar{R}_{r1}+\bar{R}_{r2}$,   is maximized:
\begin{IEEEeqnarray}{Cll}\label{FixProb}
    {\underset{q_k(i),t(i),\,\,\forall i,k}{\mathrm{maximize}}}\,\, &\bar{R}_{1r}+\bar{R}_{2r} \nonumber \\
    \mathrm{subject\,\, to} \,\, &\mathrm{C1}:\quad \bar{R}_{1r}=\bar{R}_{r2}  \nonumber \\
    &\mathrm{C2}:\quad \bar{R}_{2r}=\bar{R}_{r1} \nonumber \\
		&\mathrm{C3}:\quad \mathop \sum \nolimits_{k = 1}^6 {q_k}\left( i \right) = 1, \,\, \forall i   \nonumber \\
    &\mathrm{C4}:\quad q_k(i) [1-q_k(i)] = 0, \,\, \forall i, k \nonumber \\
    &\mathrm{C5}:\quad 0\leq t(i)\leq 1, \,\, \forall i. \IEEEyesnumber
\end{IEEEeqnarray}
where $\mathrm{C1}$ and $\mathrm{C2}$ are due to Theorem~\ref{Queue}, $\mathrm{C3}$ and $\mathrm{C4}$ guarantee that only one of the transmission modes is selected in each time slot, and $\mathrm{C5}$ specifies the interval of the time sharing variable. Considering constraints $\mathrm{C1}$ and $\mathrm{C2}$, the sum rate, $\bar{R}_{r1}+\bar{R}_{r2}$, is also given by $\bar{R}_{1r}+\bar{R}_{2r}$. 

As will be seen later, the solution of (\ref{FixProb}), i.e., the optimal mode selection policy may require coin flips. Therefore, we define $X_n(i)\in \{\mathrm{0,1}\}$ as the   outcomes of the $n$-th coin flip in the $i$-th time slot. The probabilities of the possible outcomes of the $n$-th coin flip are $\Pr\{X_n(i)=1\}=p_n$ and $\Pr\{X_n(i)=0\}=1-p_n$. In the following theorem, we provide the solution to the maximization problem in (\ref{FixProb}). To this end, we define three mutually exclusive signal-to-noise ratio (SNR) regions and each region  requires  a different optimal selection policy. For  given $\Omega_1$, $\Omega_2$, $P_1$, $P_2$, and $P_r$,  exactly one  of the  SNR regions is applicable.

\begin{theorem}[Mode Selection Policy]\label{FixProt}
\normalfont The optimal mode selection policy which maximizes the average sum rate of the considered three-node half-duplex bidirectional relay network
with AWGN and block fading is given by
\begin{IEEEeqnarray}{lll}
q_{k^*}(i)=
   \begin{cases}
     1, &k^*= {\underset{k=1,\dots,6}{\arg \, \max}} \{\mathcal{I}_k(i)\Lambda_k(i)\} \\
     0, &\mathrm{otherwise}
\end{cases}
 \IEEEyesnumber
\end{IEEEeqnarray}
where $\Lambda_k(i)$ is referred to as the selection metric,  given by
\begin{IEEEeqnarray}{lll}\label{SelecMet}
    \Lambda_1(i) = (1-\mu_1)C_{1r}(i)  \IEEEyesnumber\IEEEyessubnumber  \\
    \Lambda_2(i) =  (1-\mu_2)C_{2r}(i) \IEEEyessubnumber  \\
    \Lambda_3(i) = (1-\mu_1)C_{12r}(i)+(1-\mu_2)C_{21r}(i)\IEEEyessubnumber \\
    \Lambda_4(i) = \mu_2 C_{r1}(i) \IEEEyessubnumber\\
    \Lambda_5(i) = \mu_1 C_{r2}(i)\IEEEyessubnumber\\
    \Lambda_6(i) = \mu_1 C_{r2}(i)+\mu_2 C_{r1}(i)\IEEEyessubnumber
\end{IEEEeqnarray}
and $\mathcal{I}_k(i)\in\{0,1\}$ is a binary indicator variable which determines whether $\mathcal{M}_k$ is a \textit{possible candidate} for selection in the $i$-th time slot, i.e., mode $\mathcal{M}_k$ cannot be selected if $\mathcal{I}_k(i)=0$, and $\mu_1$ and $\mu_2$ are decision thresholds. For the values of $\mathcal{I}_k(i)$,  the optimal values of $\mu_1$ and $\mu_2$ in (\ref{SelecMet}), and the optimal value of $t(i)$ in  $C_{12r}(i)$ and $C_{21r}(i)$ in (\ref{SelecMet}c), we distinguish  three mutually exclusive SNR regions:
\vspace{0.1cm}

\noindent
\textbf{SNR Region $\mathcal{R}_1$:} SNR region $\mathcal{R}_1$ is chractarized by $\left\{ P_2>P_r  \,\,  \mathrm{AND}  \,\, \omega_1<1\right\}\,\,\mathrm{OR} \,\, \left\{ P_2<P_r  \,\,  \mathrm{AND}  \,\, \omega_2<1\right\}\,\,\mathrm{OR} \\ \left\{P_2\Equal P_r  \,  \mathrm{AND}  \,\omega_3^{l} \Less \omega_3^{u} \right\}$ with $\omega_1\Equal \frac{E\{qC_{r2}\}}{E\left\{(1-q)[C_r-C_{2r}]\right\}}, \omega_2=\frac{E\{qC_{2r}\}}{E\left\{(1-q)C_{r1}\right\}}, \omega_3^{l} = \frac{E\{C_{r2}\}}{E\{C_{r}\}}$, and $\omega_3^{u}=\frac{E\{C_{r1}\}}{E\{C_{r1}+C_{2r}\}}$, where $q(i)$ is defined as
\begin{IEEEeqnarray}{rl}\label{Q1}
   q(i)=\begin{cases}
1, &\mathrm{if}\,\,\,\,\, \left\{ P_2>P_r \,\, \mathrm{AND}\,\,  \Lambda_3 \leq \Lambda_6\big|_{\mu_1=1,\mu_2=\mu_2^*}^{t(i)=0,\,\,\forall i} \right\}\\  &\mathrm{OR} \left\{ P_2<P_r \,\, \mathrm{AND}\,\,  \Lambda_3 \geq \Lambda_6\big|_{\mu_1=\mu_1^*,\mu_2=0}^{t(i)=0,\,\,\forall i} \right\}
\\0,  & \mathrm{otherwise}
\end{cases}. \IEEEeqnarraynumspace
\end{IEEEeqnarray}
In (\ref{Q1}), if $P_2>P_r$,   $\mu_2^*$ is chosen such that $E\{(1-q)C_{2r}\}=E\{qC_{r1}\}$ holds, whereas  if $P_2<P_r$, $\mu_1^*$ is chosen such that $E\{q[C_r-C_{2r}]\}=E\{(1-q)C_{r2}\}$ holds. 

For this SNR region,   $t(i)=0$, $\forall i$,  and $\mathcal{I}_k(i)=0$, for $k=1,4$,  $\forall i$, whereas  $\mathcal{I}_k(i)$  for $k=2,3,5,6$, and the thresholds $\mu_1$ and $\mu_2$ are as follows
\begin{IEEEeqnarray}{llll}
\mathrm{If}\,P_2\Great P_r\,\mathrm{then}\,
{\begin{cases}
\mu_1\Equal 1 \\
\mu_2\Equal \mu_2^*
\end{cases}}
\mathrm{and} \,
{\begin{cases}
\mathcal{I}_5(i)\Equal 0 \\
\mathcal{I}_3(i)\Equal 1\Minus \mathcal{I}_2(i)\Equal X_1(i) \\
\mathcal{I}_6(i)\Equal 1
\end{cases}}\nonumber \\
\mathrm{If}\,P_2\Less P_r\,\mathrm{then}\,
{\begin{cases}
\mu_1\Equal \mu_1^* \\
\mu_2\Equal 0
\end{cases}} 
\mathrm{and} \,
{\begin{cases}
\mathcal{I}_2(i)\Equal 0,  \\
\mathcal{I}_6(i)\Equal 1\Minus \mathcal{I}_5(i)\Equal X_2(i)  \\
\mathcal{I}_3(i)\Equal 1
\end{cases}}\nonumber\\
\mathrm{If}\,P_2\Equal P_r\,\mathrm{then}\,
{\begin{cases}
\mu_1\Equal 1 \\
\mu_2\Equal 0
\end{cases}}
\mathrm{and} \,
{\begin{cases}
\mathcal{I}_2(i)\Equal  X_3(i) [1\Minus X_4(i)]\\
\mathcal{I}_3(i)\Equal X_3(i) X_4(i) \\
\mathcal{I}_5(i)\Equal [1\Minus X_3(i)][1\Minus X_5(i)]\\
\mathcal{I}_6(i)\Equal [1\Minus X_3(i)]X_5(i)
\end{cases}} \nonumber
\end{IEEEeqnarray}
where the coin flip probabilities  are $p_1=\omega_1, p_2=\omega_2,  p_4=\frac{(1-p_3)\omega_3^{l}}{p_3(1-\omega_3^{l})}, p_5=\frac{p_3(1-\omega_3^{u})}{(1-p_3)\omega_3^{u}}$, and for $p_3$, we have the freedom to choose any value in the interval $[\omega_3^{l}\,,\, \omega_3^{u}]$. 
\vspace{0.1cm}

\noindent
\textbf{SNR Region $\mathcal{R}_2$:} SNR region $\mathcal{R}_2$ is charactarized by $\left\{ P_1>P_r  \,\,  \mathrm{AND}  \,\, \omega_1<1\right\} \,\,  \mathrm{OR}  \,\, \left\{ P_1<P_r  \,\,  \mathrm{AND}  \,\, \omega_2<1\right\} \,\,  \mathrm{OR}  \\ \{P_1\Equal P_r \,  \mathrm{AND}  \, \omega_3^{l} \Less \omega_3^{u} \}$ with $\omega_1\Equal \frac{E\{qC_{r1}\}}{E\left\{(1-q)[C_r-C_{1r}]\right\}}, \omega_2\hspace{-0.8mm}=\hspace{-0.8mm}\frac{E\{qC_{1r}\}}{E\left\{(1-q)C_{r2}\right\}}, \omega_3^{l} = \frac{E\{C_{r1}\}}{E\{C_{r}\}}$, and $\omega_3^{u}=\frac{E\{C_{r2}\}}{E\{C_{r2}+C_{1r}\}}$, where $q(i)$ is defined as
\begin{IEEEeqnarray}{rl}\label{Q2}
   q(i)=\begin{cases}
1, &\mathrm{if}\,\,\,\,\, \left\{ P_1>P_r \,\, \mathrm{AND}\,\, \big[\Lambda_3 \leq \Lambda_6\big]_{\mu_1=\mu_1^*,\mu_2=1}^{t(i)=1,\,\,\forall i} \right\}\\  
&\mathrm{OR} \left\{ P_1<P_r \,\, \mathrm{AND}\,\, \big[\Lambda_3 \geq \Lambda_6\big]_{\mu_1=0,\mu_2=\mu_2^*}^{t(i)=1,\,\,\forall i}\right\}
\\0,  & \mathrm{otherwise}
\end{cases}. \IEEEeqnarraynumspace
\end{IEEEeqnarray}
In (\ref{Q2}), if $P_1 \Great P_r$, the threshold $\mu_1^*$ is chosen such that $E\{(1\Minus q)C_{1r}\}\Equal E\{qC_{r2}\}$ holds, whereas if $P_1\Less P_r$, the threshold $\mu_2^*$ is chosen such that $E\{q[C_r\Less C_{1r}]\}\Equal E\{(1\Less q)C_{r1}\}$ holds. 

For this SNR region,   $t(i)=1$, $\forall i$,  and $\mathcal{I}_k(i)=0$, for $k=2,5$,  $\forall i$, whereas $\mathcal{I}_k(i)$ for $k=1,3,4,6$, and the thresholds $\mu_1$ and $\mu_2$ are as follows
\begin{IEEEeqnarray}{llll}
\mathrm{If}\,P_1\Great P_r\,\mathrm{then}\,
{\begin{cases}
\mu_1\Equal \mu_1^* \\
\mu_2\Equal 1
\end{cases}}
\mathrm{and} \,
{\begin{cases}
\mathcal{I}_4(i)\Equal 0 \\
\mathcal{I}_3(i)\Equal 1\Minus \mathcal{I}_1(i)\Equal X_1(i) \\
\mathcal{I}_6(i)\Equal 1
\end{cases}}\nonumber\\
\mathrm{If}\,P_1\Less P_r\,\mathrm{then}\,
{\begin{cases}
\mu_1\Equal 0 \\
\mu_2\Equal \mu_2^*
\end{cases}}
\mathrm{and} \,
{\begin{cases}
\mathcal{I}_1(i)\Equal 0,  \\
\mathcal{I}_6(i)\Equal 1\Minus \mathcal{I}_4(i)\Equal X_2(i)  \\
\mathcal{I}_3(i)\Equal 1
\end{cases}}\nonumber\\
\mathrm{If}\,P_1\Equal P_r\,\mathrm{then}\,
{\begin{cases}
\mu_1\Equal 0 \\
\mu_2\Equal 1
\end{cases}}
\mathrm{and} \,
{\begin{cases}
\mathcal{I}_1(i)\Equal  X_3(i) [1\Minus X_4(i)]\\
\mathcal{I}_3(i)\Equal X_3(i) X_4(i) \\
\mathcal{I}_4(i)\Equal [1\Minus X_3(i)][1\Minus X_5(i)]\\
\mathcal{I}_6(i)\Equal [1\Minus X_3(i)]X_5(i)
\end{cases}} \nonumber
\end{IEEEeqnarray}
where the coin flip probabilities  are $p_1=\omega_1, p_2=\omega_2,  p_4=\frac{(1-p_3)\omega_3^{l}}{p_3(1-\omega_3^{l})}, p_5=\frac{p_3(1-\omega_3^{u})}{(1-p_3)\omega_3^{u}}$, and for $p_3$, we have the freedom to choose any value in the interval $[\omega_3^{l}\,,\, \omega_3^{u}]$. 
\vspace{0.1cm}

\noindent
\textbf{SNR Region $\mathcal{R}_0$:} The SNR region is $\mathcal{R}_0$ if and only if it is not $\mathcal{R}_1$ or $\mathcal{R}_2$. For SNR region $\mathcal{R}_0$, $\mathcal{I}_k(i)=0$ for $k=1,2,4,5$ and $\mathcal{I}_k(i)=1$ for $k=3,6$, $\forall i$, and $t(i)$ is given by
\begin{IEEEeqnarray}{C}\label{flip}
    t(i) = \begin{cases}
 0,  \quad \mathrm{if}  \,\,  t \leq 0\\
t,  \quad \mathrm{if}  \,\,  0 < t < 1\\
 1, \quad \mathrm{if}  \,\,  t \geq 1 
\end{cases} \IEEEyesnumber 
\end{IEEEeqnarray}
where $t=\frac{E\{q(C_r-C_{2r})\}-E\{(1-q)C_{r2}\}}{E\left\{q (C_r-C_{1r}-C_{2r})\right\}}$ and $q(i)$ is given by  
\begin{IEEEeqnarray}{C}\label{Q0} 
    q(i) = \begin{cases}
 1, \quad \mathrm{if}  \,\,  \frac{C_r(i)}{C_{r1}(i)+C_{r2}(i)} \geq \frac{\mu^*}{1-\mu^*} \\
 0, \quad \mathrm{if}  \,\, \mathrm{otherwise}
\end{cases} 
\end{IEEEeqnarray}
where $\mu^*$ is chosen such that $E\{qC_r\}\hspace{-1mm}=\hspace{-1mm}E\{\hspace{-0.8mm}(1\hspace{-0.8mm}-\hspace{-0.5mm}q)(C_{r1}\hspace{-1mm}+\hspace{-0.5mm}C_{r2})\hspace{-0.8mm}\}$ holds. Moreover, if $0< t < 1$, we obtain $\mu_1=\mu_2=\mu^*$. Otherwise, if $t \leq 0$ or $t \geq 1$, we obtain $\mu_1 \geq \mu_2$ and $\mu_1 \leq \mu_2$, respectively, and $\mu_1$ and $\mu_2$ are chosen such that $\mathrm{C1}$ and $\mathrm{C2}$ in (\ref{FixProb}) hold.
\end{theorem}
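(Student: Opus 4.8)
The plan is to solve the optimization problem (\ref{FixProb}) by Lagrangian duality, exploiting the fact that, after applying Theorem~\ref{Queue}, the objective and all constraints are \emph{linear} in the mode-selection variables $q_k(i)$ once the time-sharing variable $t(i)$ is fixed. First I would use Theorem~\ref{Queue} to replace the $\min\{\cdot,\cdot\}$ terms in $\bar R_{r1}$ and $\bar R_{r2}$ by the full link capacities $C_{r1}(i)$ and $C_{r2}(i)$, so that the reverse average rates in (\ref{RatReg123}c),(\ref{RatReg123}d) become linear in $q_4,q_5,q_6$; the equality constraints $\mathrm{C1}$ and $\mathrm{C2}$ then couple the forward and reverse average rates. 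Because $\mathrm{C3}$ forces the per-slot selection onto the probability simplex and $\mathrm{C4}$ forces its vertices, the problem is a (non-convex, combinatorial) linear program over an ergodic sequence of channel states.

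Next I would dualize only the coupling constraints $\mathrm{C1}$ and $\mathrm{C2}$, introducing multipliers $\mu_1$ and $\mu_2$, and form
\begin{equation}
L=(1-\mu_1)\bar R_{1r}+(1-\mu_2)\bar R_{2r}+\mu_1\bar R_{r2}+\mu_2\bar R_{r1}.
\end{equation}
Substituting (\ref{RatReg123}) and the reverse-rate expressions from Theorem~\ref{Queue}, the time-average and the finite sum combine so that $L$ collapses to $L=\lim_{N\to\infty}\frac1N\sum_{i=1}^N\sum_{k=1}^6 q_k(i)\Lambda_k(i)$ with $\Lambda_k(i)$ exactly as in (\ref{SelecMet}). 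For fixed $\mu_1,\mu_2$ and fixed $t(i)$, maximizing $L$ subject only to $\mathrm{C3}$--$\mathrm{C4}$ decouples across time slots, and each per-slot maximization of a linear functional over the simplex is attained at a vertex; hence the optimal rule selects $k^\ast(i)=\arg\max_k\{\mathcal I_k(i)\Lambda_k(i)\}$, which is the claimed policy, the indicators $\mathcal I_k(i)$ merely deactivating modes that are provably never maximizing for the relevant multiplier values.

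I would then optimize over $t(i)$ and over the multipliers. Since $t(i)$ enters only through $C_{12r}(i)$ and $C_{21r}(i)$ in $\Lambda_3$, and does so \emph{affinely}, one computes $\partial\Lambda_3/\partial t=(\mu_2-\mu_1)\log_2\!\big[(1+P_1S_1)(1+P_2S_2)/(1+P_1S_1+P_2S_2)\big]$; the bracket is nonnegative, so the optimal $t(i)$ is bang-bang, equal to $0$ when $\mu_1>\mu_2$ and to $1$ when $\mu_1<\mu_2$, while for $\mu_1=\mu_2$ the Lagrangian is $t$-independent and $t(i)$ is left free to enforce the constraints. The multipliers are finally pinned down by requiring $\mathrm{C1}$ and $\mathrm{C2}$ to hold (stationarity and complementary slackness of the dual). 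This splits into cases: when the balancing pair lies in the interior with $\mu_1=\mu_2=\mu^\ast$ one obtains region $\mathcal R_0$ with the interior $t$ of (\ref{flip}); when enforcing the constraints drives a multiplier to a boundary value ($\mu_1=1$ or $\mu_2=0$, and symmetrically), the corresponding forward or reverse modes become non-candidates ($\mathcal I_k=0$), $t$ saturates at $0$ or $1$, and one obtains regions $\mathcal R_1$ and $\mathcal R_2$; the ratios $\omega_1,\omega_2,\omega_3^{l},\omega_3^{u}$ are precisely the feasibility thresholds that decide which region applies.

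The main obstacle is twofold. First, because $\mathrm{C4}$ is a non-convex integrality constraint, one must argue that there is no duality gap, i.e., that the vertex-selection solution of the relaxed per-slot problem attains the optimum of (\ref{FixProb}). This follows from the continuity of the fading distribution, which makes $\arg\max_k\Lambda_k(i)$ almost surely unique for generic multiplier values and hence the relaxed optimum integral, together with the ergodicity of $S_1,S_2$, in the spirit of the buffer-aided one-way results of \cite{NikolaJSAC},\cite{NikolaTIT}. Second, at the boundary multiplier values two selection metrics coincide on a set of positive probability (for instance, with $\mu_1=1$ and $t=0$ one checks $\Lambda_3=\Lambda_2$), so a deterministic arg-max cannot simultaneously meet both equality constraints, and a randomized tie-break becomes unavoidable; the bulk of the remaining work is to verify that the coin flips $X_n(i)$ with probabilities $p_1=\omega_1$, $p_2=\omega_2$, $p_4$, $p_5$ (and the free $p_3\in[\omega_3^{l},\omega_3^{u}]$) are exactly calibrated so that the time-averaged rates satisfy $\mathrm{C1}$ and $\mathrm{C2}$ with equality.
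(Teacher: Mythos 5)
Your proposal is correct and takes essentially the same route as the paper's proof: dualizing the flow-balance constraints $\mathrm{C1}$ and $\mathrm{C2}$ with multipliers $\mu_1,\mu_2$ and maximizing the resulting per-slot linear functional over the simplex is exactly the paper's KKT treatment, yielding the same selection metrics $\Lambda_k(i)$, and your sign analysis $\partial\Lambda_3/\partial t=(\mu_2-\mu_1)\log_2\bigl[(1+P_1S_1)(1+P_2S_2)/(1+P_1S_1+P_2S_2)\bigr]$ reproduces the paper's stationarity condition for $t(i)$ since the bracket equals $-(C_r-C_{1r}-C_{2r})\ge 0$. Your integrality/no-duality-gap argument from the continuous fading distribution and the calibrated coin-flip tie-breaking at boundary multiplier values (e.g., $\Lambda_3=\Lambda_2$ when $\mu_1=1$, $t=0$) likewise match the paper's binary-relaxation and threshold-region appendices, so the proposal stands as a faithful blind reconstruction of the published proof.
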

%
\begin{proof}
Please refer to Appendix \ref{AppKKT}. 
\end{proof}

We note that in the proposed optimal policy, regardless of the  SNR region, the multiple access and broadcast modes are always selected with non-zero probability. Only in the extreme case, when the average SNR of one of the links is much larger than the average SNR of the other link, some of the point-to-point modes are also selected in the optimal policy.
Also, the mode selection metric $\Lambda_k(i)$ includes two parts. The first part is the capacity of mode $\mathcal{M}_k$, and the second part are the decision thresholds $\mu_1$ and/or $\mu_2$. The capacity  is a function of the instantaneous CSI, but the decision thresholds $\mu_1$ and $\mu_2$ depend only on the statistics of the channels. Hence, the decision thresholds, $\mu_1$ and $\mu_2$, can be obtained offline
and  used as long as the channel statistics remain unchanged.  

\begin{remark}
To find the optimal values of $\mu_1$ and $\mu_2$, we need  one/two-dimensional searches over $[0\,,\,1]$. Specifically, in  SNR regions $\mathcal{R}_1$ and $\mathcal{R}_2$, at least one of the thresholds $\mu_1$ and $\mu_2$  is given. Therefore, at most one one-dimensional search is required. In SNR region $\mathcal{R}_0$, if $0\leq \omega \leq 1$, we obtain $\mu_1=\mu_2=\mu^*$, therefore, a one-dimensional search is required to determine $\mu^*$. However, if $\omega<0$ or $\omega>1$, we need a two-dimensional search to find $\mu_1$ and $\mu_2$.
\end{remark}

\section{Simulation Results}\label{SimRes}

In this section, we evaluate the maximum average sum rate in the considered bidirectional relay network for  Rayleigh fading. Thus, channel gains $S_1(i)$ and $S_2(i)$ follow exponential distributions with averages $\Omega_1$ and $\Omega_2$, respectively. Furthermore, the transmit powers of user 1 and user 2  are identical and set to  $P_1=P_2=10$ dB, and $\Omega_2=0$ dB. The number of time slots used in the simulations is   $N=10^6$. 

Fig.~\ref{Comparison} illustrates the maximum achievable sum rate versus $\Omega_1$. In order to simulate all  possible cases introduced in Theorem~\ref{FixProt}, we consider $P_r=5,10,15 \,\text{dB}$, which yields the solid lines in Fig.~\ref{Comparison}. As the value of $\Omega_1$ varies from $-10$ dB to $10$ dB in Fig.~\ref{Comparison}, the SNR region changes from $\mathcal{R}_2$ to $\mathcal{R}_0$ and then from $\mathcal{R}_0$ to $\mathcal{R}_1$. Since in SNR region $\mathcal{R}_2$  the bottleneck is  the relay-to-user 1 link, reducing $\Omega_1$ reduces the sum rate. On the other hand, in SNR region $\mathcal{R}_1$, the bottleneck link is the  relay-to-user 2 link, and therefore, for large $\Omega_1$, the  maximum average sum rate  saturates.
\begin{figure}
\centering
\psfrag{Y}[c][c][0.75]{$\text{Sum Rate}$}
\psfrag{X}[c][c][0.75]{$\Omega_1\,\text{(dB)}$}
\psfrag{L1}[l][c][0.45]{$\text{Proposed Selection Policy}\,(P_r=15\,\text{dB})$}
\psfrag{L2}[l][c][0.45]{$\text{Proposed Selection Policy}\,(P_r=10\,\text{dB})$}
\psfrag{L3}[l][c][0.45]{$\text{Proposed Selection Policy}\,(P_r=5\,\text{dB})$}
\psfrag{L4}[l][c][0.45]{$\text{Selection Policy in [8]}\,(P_r=10\,\text{dB})$}
\psfrag{L5}[l][c][0.45]{$\text{MABC}\,(P_r=10\,\text{dB})$}
\psfrag{L6}[l][c][0.45]{$\text{TDBC}\,(P_r=10\,\text{dB})$}
\psfrag{L7}[l][c][0.45]{$\text{Traditional Two-Way}\,(P_r=10\,\text{dB})$}
\psfrag{00}[c][c][0.65]{$-10$}
\psfrag{01}[c][c][0.65]{$-8$}
\psfrag{02}[c][c][0.65]{$-6$}
\psfrag{03}[c][c][0.65]{$-4$}
\psfrag{04}[c][c][0.65]{$-2$}
\psfrag{05}[c][c][0.65]{$0$}
\psfrag{06}[c][c][0.65]{$2$}
\psfrag{07}[c][c][0.65]{$4$}
\psfrag{08}[c][c][0.65]{$6$}
\psfrag{09}[c][c][0.65]{$8$}
\psfrag{10}[c][c][0.65]{$10$}
\psfrag{99}[c][c][0.65]{$0$}
\psfrag{12}[c][c][0.65]{$0.5$}
\psfrag{13}[c][c][0.65]{$1$}
\psfrag{14}[c][c][0.65]{$1.5$}
\psfrag{15}[c][c][0.65]{$2$}
\psfrag{16}[c][c][0.65]{$2.5$}
\psfrag{17}[c][c][0.65]{$3$}
\psfrag{18}[c][c][0.65]{$3.5$}
\psfrag{19}[c][c][0.65]{$4$}
\includegraphics[width=3.5 in]{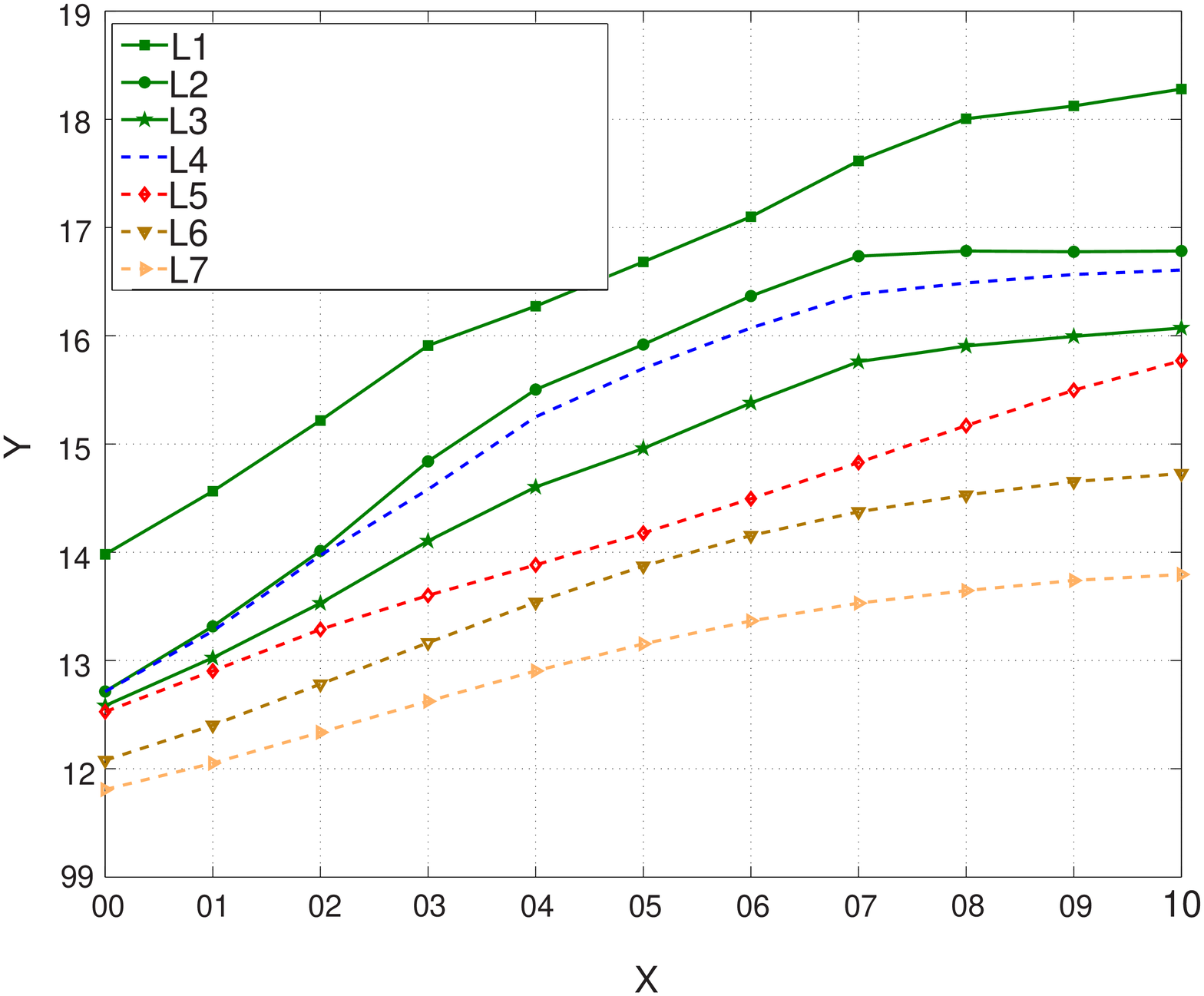}
\caption{Maximum achievable sum rate versus $\Omega_1$.}
\label{Comparison}
\end{figure}

As   performance benchmarks, we consider the traditional two-way, the TDBC \cite{TDBC},  the MABC \cite{MABC}, and the buffer-aided protocol in \cite{PopovskiLetter}. Since in \cite{PopovskiLetter}, the protocol  is derived for $P_1=P_2=P_r$,  in Fig.~3,  we only show the sum  rates of the benchmark schemes for $P_r=10 \,\text{dB}$. From the comparison in Fig.~3, we observe that a considerable gain is obtained by adaptive mode selection compared to the traditional two-way, TDBC, and MABC protocols. Furthermore, a small gain is obtained compared to the buffer-aided protocol in \cite{PopovskiLetter} which selects only the user 1-to-relay, user 2-to-relay, and broadcast modes adaptively. Finally, we note that our protocol is derived for  given $\Omega_1$, $\Omega_2$, $P_1$, $P_2$, and $P_r$,  and without a constraint on the total average power consumed by all nodes. The problem with a total average power constraint would lead to a different solution.

\section{Conclusion}\label{Conclusion}
We have derived the optimal transmission mode selection policy for sum rate maximization of the three-node half-duplex bidirectional buffer-aided relay network with fading links and fixed transmit powers at the nodes. The proposed selection policy determines the optimal transmission mode based on the instantaneous CSI of the involved links in each time slot and their long-term statistics. Simulation results confirmed that the proposed selection policy outperforms the existing protocols from the literature in terms of average sum rate.

\appendices


\section{Proof of Theorem 1 (Optimal Queue Condition)}
\label{AppQueue}

Let $\bar{C}_{r1}$ and $\bar{C}_{r2}$ denote the average rates achieved by the relay assuming that buffers $B_1$ and $B_2$ always have enough information to supply. Then, $\bar{C}_{r1}$ and $\bar{C}_{r2}$ are given by
\begin{IEEEeqnarray}{lll} \label{Capacity}  
		\bar{C}_{r1}= \underset{N\to \infty}{\lim}\frac{1}{N}\mathop \sum \limits_{i = 1}^N \left[ q_4(i)+q_6(i)\right]C_{r1}(i) \IEEEyesnumber \IEEEyessubnumber\\
\bar{C}_{r2}= \underset{N\to \infty}{\lim}\frac{1}{N}\mathop \sum \limits_{i = 1}^N \left[ q_5(i)+q_6(i)\right]C_{r2}(i). \IEEEyessubnumber
\end{IEEEeqnarray}
If we assume that $\bar{R}_{2r}\geq\bar{C}_{r1}$ holds, i.e., the average rate flowing into the buffer is larger than or equal to the average capacity that can  flow out of the buffer, then the buffer always has enough information to supply and $\bar{R}_{r1}$ in (\ref{RatReg123}) becomes $\bar{R}_{r1}=\bar{C}_{r1}$, where $\bar{C}_{r1}$ is given in (\ref{Capacity}a).  For a proof of this property, we refer to the proof of Theorem 1 in \cite[Appendix A]{NikolaJSAC}.  On the other hand, if we assume that $\bar{R}_{2r}<\bar{C}_{r1}$ holds, i.e., the average information flowing into the buffer is less than the average capacity that can flow  out of the buffer, then by the law of conservation of flow, we obtain $\bar{R}_{r1}=\bar{R}_{2r}$. Similar  results hold for $\bar{R}_{r2}$, $\bar{R}_{1r}$, and $\bar{C}_{r2}$. Hence, $\bar{R}_{r1}$ and $\bar{R}_{r2}$ can be written as
\begin{IEEEeqnarray}{lll}\label{eq_1}
   \bar{R}_{r1} = \begin{cases}
\bar{C}_{r1}, \,\, \mathrm{if}\,\, \bar{R}_{2r}\geq\bar{C}_{r1} \\
\bar{R}_{2r}, \,\, \mathrm{otherwise}
\end{cases} \IEEEyesnumber\IEEEyessubnumber \\
\bar{R}_{r2} = \begin{cases}
\bar{C}_{r2}, \,\, \mathrm{if}\,\, \bar{R}_{1r}\geq\bar{C}_{r2} \\
\bar{R}_{1r}, \,\, \mathrm{otherwise}
\end{cases} \IEEEyessubnumber
\end{IEEEeqnarray}
On the other hand, (\ref{eq_1}) can also be written as 
\begin{IEEEeqnarray}{lll}\label{eq_2}
\bar{R}_{r1} = \min\{\bar{R}_{2r},\bar{C}_{r1}\}\label{eq_2b}\IEEEyesnumber\IEEEyessubnumber \\
    \bar{R}_{r2} = \min\{\bar{R}_{1r},\bar{C}_{r2}\}\IEEEyessubnumber
\end{IEEEeqnarray}
i.e.,  the average   rate flowing out of buffer $B_1$ ($B_2$) is limited by the minimum of the average rate flowing into the buffer  and the capacity rate that can flow out of the buffer.
  We will show that, in the optimal selection policy,   $\bar{R}_{1r}=\bar{C}_{r2}$ and    $\bar{R}_{2r}=\bar{C}_{r1}$  must hold, and as a result from (\ref{eq_2}a) and (\ref{eq_2}b), we obtain  $\bar{R}_{1r}=\bar{R}_{r2}=\bar{C}_{r2}$ and $\bar{R}_{2r}=\bar{R}_{r1}=\bar{C}_{r1}$.  
 To show this, we denote by $I_k$ the set comprising the time slots $i$ in which $q_k(i)=1$, i.e., the elements in $I_k$ represent the time slots in which transmission mode $\mathcal{M}_k$ is selected in the optimal solution. Note that $\sum_{k=1}^6 |I_k|=N$, where $|\cdot|$ denotes the cardinality of a set and $\varoslash$ denotes an empty set with zero cardinality. We will show that, except for the case when both arguments of the $\min$ functions in (\ref{eq_2}a) and (\ref{eq_2}b)  are equal, all other cases result in suboptimal sum rate or can be safely replaced with the case when  both arguments of the $\min$ function are equal without reducing the sum rate. For the two $\min$ functions,  we have to consider the following nine cases:\\*
\textbf{Case 1:} If $\bar{R}_{1r}\hspace{-1mm}>\hspace{-1mm}\bar{C}_{r2}$ and $\bar{R}_{2r}\hspace{-1mm}>\hspace{-1mm}\bar{C}_{r1}$, then we can move any indices $i$ from   $I_1, I_2$, or $I_3$ to $I_6$ and thereby increase both $\bar{C}_{r1}$ and $\bar{C}_{r2}$, and thus improve the sum rate. Hence, $\bar{R}_{1r}\hspace{-1mm}>\hspace{-1mm}\bar{C}_{r2}$ and $\bar{R}_{2r}\hspace{-1mm}>\hspace{-1mm}\bar{C}_{r1}$ cannot hold since the resulting sum rate is suboptimal.\\*
\textbf{Case 2:} If $\bar{R}_{1r}\hspace{-1mm}=\hspace{-1mm}\bar{C}_{r2}$ and $\bar{R}_{2r}\hspace{-1mm}>\hspace{-1mm}\bar{C}_{r1}$, then if $I_2 \neq \varoslash$, one can move some indices from $I_2$ to $I_4$  which maintains $\bar{R}_{1r}\hspace{-1mm}=\hspace{-1mm}\bar{C}_{r2}$ and increases $\bar{C}_{r1}$, and consequently increases the sum rate, thereby contradicting optimality. On the other hand, if $I_2 = \varoslash$, we have $I_3 \neq \varoslash $ since otherwise, $\bar{R}_{2r}=0$ and because we have assumed $\bar{R}_{2r}\hspace{-1mm}>\hspace{-1mm}\bar{C}_{r1}$, we obtain $\bar{R}_{2r}\hspace{-1mm}=\hspace{-1mm}\bar{C}_{r1}=0$. Hence, assume $I_2 = \varoslash$ and $I_3 \neq \varoslash$, then there are two possibilities. The first is when   $t(i) = 1$, $ \forall i$  for which $q_3(i)=1$. Then, since $C_{12r}(i)$ becomes $C_{12r}(i) = C_{1r}(i)$, we can move some indices from $I_3$ to $I_1$, which maintains $\bar{R}_{1r}\hspace{-1mm}=\hspace{-1mm}\bar{C}_{r2}$ and decreases $\bar{R}_{2r}$ until we obtain $\bar{R}_{2r}\hspace{-1mm}=\hspace{-1mm}\bar{C}_{r1}$, thereby contradicting $\bar{R}_{2r}\hspace{-1mm}>\hspace{-1mm}\bar{C}_{r1}$. However, if $t(i)\neq 1$ for some time slots $i$ for which $q_3(i)=1$, then since $C_{12r}(i)\leq C_{1r}(i)$ holds, we can move some indices from $I_3$ to $I_1$ and $I_4$ such that $\bar{R}_{1r}\hspace{-1mm}=\hspace{-1mm}\bar{C}_{r2}$ is maintained and $\bar{C}_{r1}$ is increased and thereby the sum rate is increased, which results in a contradiction of the optimal sum rate assumption. \\*
\textbf{Case 3:} If $\bar{R}_{1r}\hspace{-1mm}=\hspace{-1mm}\bar{C}_{r2}$ and $\bar{R}_{2r}\hspace{-1mm}<\hspace{-1mm}\bar{C}_{r1}$, then if $I_4 \neq \varoslash$,  we can move some indices of $I_4$ to $I_2$ which maintains $\bar{R}_{1r}\hspace{-1mm}=\hspace{-1mm}\bar{C}_{r2}$ and increases $\bar{R}_{2r}$ and consequently increases the sum rate and thus contradicts optimality. On the other hand, if $I_4 = \varoslash$,  we must have $\bar{I}_6 \neq \varoslash$, otherwise, $\bar{C}_{r2}=0$, and since we have assumed $\bar{R}_{2r}\hspace{-1mm}<\hspace{-1mm}\bar{C}_{r1}$, we obtain $\bar{R}_{2r}\hspace{-1mm}=\hspace{-1mm}\bar{C}_{r1}=0$. However, if $I_6 \neq \varoslash$, we can move some of the indices in $I_6$ to $I_5$ which maintains $\bar{R}_{1r}\hspace{-1mm}=\hspace{-1mm}\bar{C}_{r2}$ and decreases $\bar{C}_{r1}$ until we obtain $\bar{R}_{2r}\hspace{-1mm}=\hspace{-1mm}\bar{C}_{r1}$, thereby contradicting $\bar{R}_{2r}\hspace{-1mm}<\hspace{-1mm}\bar{C}_{r1}$.\\*
\textbf{Case 4:} If $\bar{R}_{1r}\hspace{-1mm}>\hspace{-1mm}\bar{C}_{r2}$ and $\bar{R}_{2r}\hspace{-1mm}<\hspace{-1mm}\bar{C}_{r1}$, then if $I_4 \neq \varoslash$,  we can move some indices of $I_4$ to $I_5$, which increases $\bar{C}_{r2}$ and consequently the sum rate and thereby contradicting optimality. However, if $I_4 = \varoslash$, then we must have   $I_6 \neq \varoslash$  otherwise  $\bar{C}_{r1}=0$ leading to $\bar{R}_{2r}\hspace{-1mm}=\hspace{-1mm}\bar{C}_{r1}=0$ and the rest of the conclusions are similar to that in Case 2. Then, if $I_6 \neq \varoslash$, one can move some of the indices from $I_6$ to $I_5$ which maintains $\bar{R}_{1r}\hspace{-1mm}>\hspace{-1mm}\bar{C}_{r2}$ but decreases $\bar{C}_{r1}$ until we obtain $\bar{R}_{2r}\hspace{-1mm}=\hspace{-1mm}\bar{C}_{r1}$ and the rest of conclusions are similar to those in Case 2.

The conclusions for the four remaining cases, ($\bar{R}_{1r}\hspace{-1mm}<\hspace{-1mm}\bar{C}_{r2}$ and $\bar{R}_{2r}\hspace{-1mm}<\hspace{-1mm}\bar{C}_{r1}$), ($\bar{R}_{1r}\hspace{-1mm}>\hspace{-1mm}\bar{C}_{r2}$ and $\bar{R}_{2r}\hspace{-1mm}=\hspace{-1mm}\bar{C}_{r1}$), ($\bar{R}_{1r}\hspace{-1mm}<\hspace{-1mm}\bar{C}_{r2}$ and $\bar{R}_{2r}\hspace{-1mm}=\hspace{-1mm}\bar{C}_{r1}$), and ($\bar{R}_{1r}\hspace{-1mm}<\hspace{-1mm}\bar{C}_{r2}$ and $\bar{R}_{2r}\hspace{-1mm}>\hspace{-1mm}\bar{C}_{r1}$) are similar to those in Case 1, Case 2, Case 3 and Case 4, respectively. Therefore, the final case remaining for the optimal solution is $\bar{R}_{r2} \hspace{-1mm} = \hspace{-1mm}\bar{R}_{1r}\hspace{-1mm}=\hspace{-1mm}\bar{C}_{r2}$ and $\bar{R}_{r1}\hspace{-1mm} =\hspace{-1mm} \bar{R}_{2r}\hspace{-1mm}=\hspace{-1mm}\bar{C}_{r1}$. This completes the proof.


\section{Proof of Theorem 2 (Mode Selection Protocol)}
\label{AppKKT}

In this appendix, we solve the optimization problem given by (\ref{FixProb}). We first relax the binary condition for $q_k(i)$, i.e., $q_k(i)[1-q_k(i)]=0$, to $0\leq q_k(i)\leq 1$, and later in Appendix \ref{AppBinRelax}, we prove  that the binary relaxation does not affect the maximum average sum rate. In the following, we investigate the Karush-Kuhn-Tucker (KKT) necessary conditions of the relaxed optimization problem and then show that the necessary conditions result in a unique sum rate and thus the solution is optimal. 

To simplify the usage of the KKT conditions, we formulated a minimization problem equivalent to the relaxed maximization problem in (\ref{FixProb}) as follows
\begin{IEEEeqnarray}{Cll}\label{FixProbMin}
    {\underset{q_k(i),t(i),\,\,\forall i,k}{\mathrm{minimize}}}\,\, &-(\bar{R}_{1r}+\bar{R}_{2r}) \nonumber \\
    \mathrm{subject\,\, to} \,\, &\mathrm{C1}:\quad \bar{R}_{1r}-\bar{R}_{r2}=0  \nonumber \\
    &\mathrm{C2}:\quad \bar{R}_{2r}-\bar{R}_{r1}=0 \nonumber \\
		&\mathrm{C3}:\quad \mathop \sum \nolimits_{k = 1}^6 {q_k}\left( i \right) - 1 =0, \,\, \forall i   \nonumber \\
    &\mathrm{C4}:\quad q_k(i)-1 \leq 0, \,\, \forall i, k \nonumber \\
&\mathrm{C5}:\quad -q_k(i) \leq 0, \,\, \forall i, k \nonumber \\
    &\mathrm{C6}:\quad t(i)-1 \leq 0, \,\, \forall i \nonumber \\
&\mathrm{C7}:\quad -t(i) \leq 0, \,\, \forall i. \IEEEyesnumber
\end{IEEEeqnarray}
The Lagrangian function for the above optimization problem is provided in (\ref{KKT Function}) at the top of the next page where $\mu_1,\mu_2,\lambda(i),\alpha_k(i),\beta_k(i),\phi_1(i)$ and $\phi_2(i)$ are the Lagrange multipliers corresponding to constraints $\mathrm{C1,C2,C3,C4,}\\ \mathrm{C5,C6}$ and $\mathrm{C7}$, respectively. The KKT conditions include the following:

\begin{figure*}[!t]
\normalsize
\begin{IEEEeqnarray}{l}\label{KKT Function}
   \underset{\mathrm{for}\,\, \forall i,k,l}{\mathcal{L}(q_k(i),t(i),\lambda(i),\alpha_k(i),\beta_k(i),\phi_l(i))}  =  \nonumber \\  \qquad- (\bar{R}_{1r}+\bar{R}_{2r}) + \mu_1(\bar{R}_{1r}-\bar{R}_{r2}) + \mu_2(\bar{R}_{2r}-\bar{R}_{r1}) 
    + \mathop \sum \limits_{i = 1}^N \lambda \left( i \right)\left( {\mathop \sum \limits_{k = 1}^6 {q_k}\left( i \right) - 1} \right) \nonumber \\ 
\qquad + \mathop \sum \limits_{i = 1}^N \mathop \sum \limits_{k = 1}^6 {\alpha _k}\left( i \right)\left( {{q_k}\left( i \right) - 1} \right) - \mathop \sum \limits_{i = 1}^N \mathop \sum \limits_{k = 1}^6 {\beta _k}\left( i \right){q_k}\left( i \right) + \mathop \sum \limits_{i = 1}^N \phi_1(i) (t(i)-1) - \mathop \sum \limits_{i = 1}^N \phi_0(i) t(i) \IEEEeqnarraynumspace \IEEEyesnumber
\end{IEEEeqnarray}
\hrulefill

\vspace*{4pt}
\end{figure*}

\noindent
\textbf{1)} Stationary condition: The differentiation of the Lagrangian function with respect to the primal variables, $q_k(i)$ and $t(i)$ for $\forall i,k$, is equal to zero for the optimal solution, i.e.,
\begin{IEEEeqnarray}{CCCl}\label{Stationary Condition}
    \frac{\partial\mathcal{L}}{\partial q_k(i)} &=& 0 \quad &\mathrm{for} \,\, \forall i,k \IEEEyesnumber\IEEEyessubnumber\\
\frac{\partial\mathcal{L}}{\partial t(i)} &=& 0 \quad &\mathrm{for} \,\, \forall i.\IEEEyessubnumber
\end{IEEEeqnarray}

\noindent
\textbf{2)} Primal feasibility condition: The optimal solution has to satisfy the constraints of the primal problem in (\ref{FixProbMin}).

\noindent
\textbf{3)} Dual feasibility condition: The Lagrange multipliers for the inequality constraints have to be non-negative, i.e.,
\begin{IEEEeqnarray}{lll}\label{Dual Feasibility Condition}
            \alpha_k(i)\geq 0,  &\forall i,k \IEEEyesnumber\IEEEyessubnumber\\
\beta_k(i)\geq 0, &\forall i,k \IEEEyessubnumber\\
             \phi_l(i) \geq0,   & \forall i,l \IEEEyessubnumber
\end{IEEEeqnarray}
have to hold.

\noindent
\textbf{4)} Complementary slackness: If an inequality is inactive, i.e., the optimal solution is in the interior of the corresponding set, the corresponding Lagrange multipliers are zero. Thus, we obtain
\begin{IEEEeqnarray}{lll}\label{Complementary Slackness}
    {\alpha _k}\left( i \right)\left( {{q_k}\left( i \right) - 1} \right)=0,\quad  &\forall i,k \IEEEyesnumber\IEEEyessubnumber\\
    {\beta _k}\left( i \right){q_k}\left( i \right)=0,  &\forall i,k \IEEEyessubnumber\\
    \phi_1(i) (t(i)-1) = 0, &\forall i \IEEEyessubnumber\\
		\phi_0(i) t(i) = 0, &\forall i.\IEEEyessubnumber
\end{IEEEeqnarray}
A common approach to find a set of primal variables, i.e., $q_k(i), t(i),\,\,\forall i,k$ and Lagrange multipliers, i.e., $\mu_1,\mu_2,\lambda(i),\\ \alpha_k(i),\beta_k(i),\phi_l(i),\,\,
\forall i,k,l$, which satisfy the KKT conditions is to start with the complementarity slackness conditions and see if the inequalities are active or not. Combining these results with the primal feasibility and dual feasibility conditions, we obtain various possibilities. Then, from these possibilities, we obtain one or more candidate solutions from the stationary conditions and the optimal solution is surely one of these candidates. In the following subsections, we first pursue this procedure in order to find optimal $q_k^*(i)$ and $t^*(i)$ for $\forall i,k$, and then, we construct the protocol introduced in Theorem \ref{FixProt} based on three exclusive SNR regions.

\subsection{Optimal $q_k^*(i)$}

In order to determine the optimal selection policy, we differentiate the Lagrangian function in (\ref{KKT Function}) with respect to   $q_k(i)$, for $k=1,...,6$, equate the result to zero, and obtain the corresponding $q_k(i)$. This leads to
\begin{IEEEeqnarray}{lll}\label{Stationary Mode}
    q_1^*(i): \,\, &-\frac{1}{N}(1-\mu_1)C_{1r}(i)+\lambda(i)+\alpha_1(i)-\beta_1(i)=0\,\, \IEEEeqnarraynumspace \IEEEyesnumber \IEEEyessubnumber\\
    q_2^*(i): \,\, &-\frac{1}{N}(1-\mu_2)C_{2r}(i)+\lambda(i)+\alpha_2(i)-\beta_2(i)=0 \IEEEyessubnumber \\
    q_3^*(i): \,\, &-\frac{1}{N}[(1-\mu_1)C_{12r}(i)+(1-\mu_2)C_{21r}(i)] +\lambda(i)+\alpha_3(i)-\beta_3(i)=0 \IEEEeqnarraynumspace \IEEEyessubnumber \\
    q_4^*(i): \,\, &-\frac{1}{N}\mu_2 C_{r1}(i)+\lambda(i)+\alpha_4(i)-\beta_4(i)=0 \IEEEyessubnumber\\
    q_5^*(i): \,\, &-\frac{1}{N}\mu_1 C_{r2}(i)+\lambda(i)+\alpha_5(i)-\beta_5(i))=0 \IEEEyessubnumber\\
    q_6^*(i): \,\, &-\frac{1}{N}[\mu_1 C_{r2}(i)+\mu_2 C_{r1}(i)]+\lambda(i) +\alpha_6(i)-\beta_6(i)=0. \IEEEeqnarraynumspace\IEEEyessubnumber
\end{IEEEeqnarray}
Without loss of generality, we obtain the necessary condition for $q_1^*(i)=1$ and then generalize the result to $q_k^*(i)=1$ for $k=2,\dots,6$. From constraint $\mathrm{C3}$ in (\ref{FixProbMin}), the other selection variables are zero, i.e., $q_k^*(i)=0$, for $k=2,3,4,5,6$. Furthermore, from (\ref{Complementary Slackness}a), we obtain $\alpha_k(i)=0$, for $k = 2,3,4,5,6$ and from (\ref{Complementary Slackness}b), we obtain $ \beta_1(i)=0$. Then, by substituting these values into (\ref{Stationary Mode}), we obtain
\begin{IEEEeqnarray}{lll}\label{MET}
    \lambda(i)+\alpha_1(i) = (1-\mu_1)C_{1r}(i) \triangleq \Lambda_1(i)  \IEEEyesnumber\IEEEyessubnumber  \\
    \lambda(i)-\beta_2(i) =  (1-\mu_2)C_{2r}(i) \triangleq \Lambda_2(i)\IEEEyessubnumber  \\
   \lambda(i)-\beta_3(i) =  (1\Minus \mu_1)C_{12r}(i)\Add (1\Minus \mu_2)C_{21r}(i) \hspace{-1mm} \triangleq \hspace{-1mm} \Lambda_3(i) \quad\,\,\,\, \IEEEyessubnumber \\
   \lambda(i)-\beta_4(i)  = \mu_2 C_{r1}(i) \triangleq \Lambda_4(i) \IEEEyessubnumber\\
    \lambda(i)-\beta_5(i) = \mu_1 C_{r2}(i) \triangleq \Lambda_5(i) \IEEEyessubnumber\\
    \lambda(i)-\beta_6(i) = \mu_1 C_{r2}(i)+\mu_2 C_{r1}(i) \triangleq \Lambda_6(i)\IEEEyessubnumber
\end{IEEEeqnarray}
where $\Lambda_k(i)$ is referred  to as selection metric. By subtracting (\ref{MET}a) from the rest of the equations in (\ref{MET}), we obtain
\begin{IEEEeqnarray}{rCl}\label{eq_2_1}
    \Lambda_1(i) - \Lambda_k(i) = \alpha_1(i)+\beta_k(i), \quad k=2,3,4,5,6. \IEEEyesnumber
\end{IEEEeqnarray}
From the dual feasibility conditions given in (\ref{Dual Feasibility Condition}a) and (\ref{Dual Feasibility Condition}b), we have $\alpha_k(i),\beta_k(i)\geq 0$. By inserting $\alpha_k(i),\beta_k(i)\geq 0$ in (\ref{eq_2_1}), we obtain the necessary condition for $q_1^*(i)=1$ as
\begin{IEEEeqnarray}{lll}
    \Lambda_1(i) \geq \max \left \{ \Lambda_2(i), \Lambda_3(i), \Lambda_4(i), \Lambda_5(i), \Lambda_6(i) \right \}. \IEEEyesnumber
\end{IEEEeqnarray}
Repeating the same procedure for $q_k^*(i)=1,\,\,k=2,\dots,6$, we obtain the necessary condition for selecting  transmission mode $\mathcal{M}_{k^*}$ in the $i$-th time slot as 
\begin{IEEEeqnarray}{lll}\label{OptMet}
   \Lambda_{k^*}(i) \geq {\underset{k\in\{1,\cdots,6\}}{\max}}\{\Lambda_{k}(i)\}, \IEEEyesnumber
\end{IEEEeqnarray}
where the Lagrange multipliers $\mu_1$ and $\mu_2$ are chosen such that $\mathrm{C1}$ and $\mathrm{C2}$ in (\ref{FixProbMin}) hold and the optimal value of $t(i)$ in $C_{12r}(i)$ and $C_{21r}(i)$ is obtained in the next subsection. 
We note that $\mu_1$ and $\mu_2$ are long-term thresholds (Lagrange multipliers) which only depend on the statistics of the channels. Moreover, we prove in Appendix \ref{AppMURegion} that the optimal values of $\mu_1$ and $\mu_2$ are in the interval $[0\,\,1]$. Moreover, for $\mu_1\neq 0,1$ and $\mu_2\neq 0,1$, the probability that two selection metrics in (\ref{MET}) are equal is zero due to the randomness of the time-continuous channel gains. Therefore, the necessary condition for selecting transmission mode $\mathcal{M}_k$ in (\ref{OptMet}) is in fact sufficient and is the optimal selection policy. However, if $\mu_1 = 0,1$ or $\mu_2= 0,1$, then some of the selection metrics in (\ref{MET}) are equal for all time slots. In this case, we have the freedom to choose between the modes which have the same value of the selection metric in (\ref{MET}) as long as the long-term constraints $\mathrm{C1}$ and $\mathrm{C2}$ in (\ref{FixProbMin}) hold. One way of selecting between the modes for which the selection metrics $\Lambda_{k}(i)$ are identical is    to use a probabilistic approach via coin flips.  Therefore, let $X_n(i)\in \{\mathrm{0,1}\}$ be the outcomes of the $n$-th coin flip in the $i$-th time slot with the probabilities of the possible outcomes defined as $\Pr\{X_n(i)=1\}=p_n$ and $\Pr\{X_n(i)=0\}=1-p_n$. In order to include the coin flip in (\ref{OptMet}), we write the necessary and sufficient condition for the selection of mode $\mathcal{M}_{k}$ in the $i$-th time slot as
\begin{IEEEeqnarray}{lll}\label{OptMet2}
 \Lambda_{k}(i) > {\underset{k\in\{1,\cdots,6\}}{\max}}\{\mathcal{I}_{k}(i)\Lambda_{k}(i)\} \IEEEyesnumber
\end{IEEEeqnarray}
where $\mathcal{I}_{k}(i)\in\{0,1\}$ is a binary indicator variable which is equal to zero for the modes that can not be selected in the $i$-th time slot and one otherwise. As will be seen, the value of $\mathcal{I}_{k}(i)$, depending on the average SNRs, is either deterministic or obtained  as a function of the outcome of the  coin flip(s).

\subsection{Optimal $t^*(i)$}
To obtain the optimal $t^*(i)$ for the multiple access mode, we assume $q^*_3(i)=1$ and calculate the derivative in (\ref{Stationary Condition}b) which leads to
\begin{IEEEeqnarray}{lll} \label{Stationary t}
    t^*(i):\,\, -\frac{1}{N}(\mu_1-\mu_2) [ C_r(i) - C_{1r}(i) - C_{2r}(i)] + \phi_1(i)-\phi_0(i)=0.  \IEEEeqnarraynumspace\IEEEyesnumber
\end{IEEEeqnarray}
Now, we investigate the following possible cases for $t^*(i)$:

\noindent
\textbf{Case 1:} If $0<t^*(i)<1$, then from (\ref{Complementary Slackness}c) and (\ref{Complementary Slackness}d), we obtain $\phi_l(i)=0,$ for $l=0,1$. From (\ref{Stationary t}) and by considering  $C_r(i) - C_{1r}(i) - C_{2r}(i)\leq 0$, we also obtain $\mu_1=\mu_2=\mu$. Combining these results into (\ref{OptMet}), leads to
\begin{IEEEeqnarray}{lll} \label{MU12}
    \Lambda_3(i) = (1-\mu)C_r(i) \geq \max \{\Lambda_1(i),\Lambda_2(i)\} \IEEEyesnumber\IEEEyessubnumber \\
\Lambda_6(i) = \mu C_{r1}(i) + \mu C_{r2}(i)  \geq \max \{\Lambda_4(i),\Lambda_5(i)\}. \IEEEeqnarraynumspace \IEEEyessubnumber
\end{IEEEeqnarray}
The inequalities in (\ref{MU12}a) and  (\ref{MU12}b) hold with equality with non-zero probability if and only if $\mu=1$ and $\mu=0$, respectively, however, this leads to a contradiction as shown in Appendix \ref{AppMURegion}. Therefore, only $\mathcal{M}_3$ and $\mathcal{M}_6$ are selected for the optimal strategy. Hence, we can use the notation in (\ref{OptMet2}), by  setting $\mathcal{I}_k(i)=0, \,\, k=1,2,4,5$ and $\mathcal{I}_k(i)=1,\,\,k=3,6$, $\forall i$.

In this case, the values of $t(i)$, $\forall i$, and the value of $\mu$ are found such that   constraints $\mathrm{C1}$ and $\mathrm{C2}$ in (\ref{FixProbMin})  hold. Note that when  $\mathrm{C1}$ and $\mathrm{C2}$ hold, the following also holds
\begin{IEEEeqnarray}{rCl}\label{EquCons}
    \bar{R}_{1r}+\bar{R}_{2r}&=&\bar{R}_{r1}+\bar{R}_{r2}. \IEEEyesnumber
\end{IEEEeqnarray}
Now,  since $\Lambda_3(i)$ and $\Lambda_6(i)$  are the only selection metrics for which $I_k(i)\neq 0$, for $k=3,6$, $\forall i$, we observe that both $\Lambda_3(i)$ and $\Lambda_6(i)$, and both the left and right hand side in (\ref{EquCons}) are independent of the value of $t(i)$, $\forall i$. Thus, we can set $\mu$ independently from $t(i)$ such that (\ref{EquCons}) is satisfied. We note that the optimal $\mu$ is unique since increasing $\mu$ increases the right hand side of (\ref{EquCons}) and decreases the left hand side, simultaneously. Since the sum rate does not depend on $t(i)$, i.e., $\bar{R}_{1r}+\bar{R}_{2r}=\frac{1}{N}\mathop \sum_{i = 1}^N q_3(i) C_{r}(i)$, any choice of $t(i)$, $\forall i$, that satisfies $\mathrm{C1}$ in (\ref{FixProbMin}) is optimal. Then, if $\mathrm{C1}$ in (\ref{FixProbMin}) and (\ref{EquCons}) hold, $\mathrm{C2}$ in (\ref{FixProbMin}) also holds. As a result, we have satisfied both $\mathrm{C1}$ and $\mathrm{C2}$ by satisfying $\mathrm{C1}$ in (\ref{FixProbMin}) and (\ref{EquCons}). Therefore, we assume $t(i)=t,\,\,\forall i$, and $t$ is given by
\begin{IEEEeqnarray}{rCl}\label{TimeVar}
   t=\frac{E\{q(C_r-C_{2r})\}-E\{(1-q)C_{r2}\}}{E\left\{q (C_r-C_{1r}-C_{2r})\right\}} \IEEEeqnarraynumspace \IEEEyesnumber
\end{IEEEeqnarray}
where 
\begin{IEEEeqnarray}{C} 
    q(i) = \begin{cases}
 1, \quad \mathrm{if}  \,\,  \frac{C_r(i)}{C_{r1}(i)+C_{r2}(i)} \geq \frac{\mu}{1-\mu} \\
 0, \quad  \mathrm{otherwise}
\end{cases}. 
\end{IEEEeqnarray}

We note that the necessary condition for the optimality of this case is that $0< t < 1$. For the cases which lead to $t\leq 0$ or $t\geq 1$, we cannot have $\mu_1=\mu_2$, and therefore, we obtain either $t(i)=1$ or $t(i)=0$. In these cases, we need different selection policies which are investigated in the following.

\noindent
\textbf{Case 2:} If $t(i)=0$, then from (\ref{Complementary Slackness}c), we obtain $\phi_1(i)=0$ and from (\ref{Dual Feasibility Condition}c), we obtain $\phi_0(i)\geq 0$. Combining these results with (\ref{Stationary t}), we obtain $\mu_1\geq\mu_2$. Then, by substituting these results in (\ref{MET}), we obtain
\begin{IEEEeqnarray}{rll}\label{M25}
    \Lambda_3(i) &=& (1-\mu_1)[C_{r}(i)-C_{2r}(i)]+(1-\mu_2)C_{2r}(i) \nonumber \\
     &=&(1-\mu_1)C_{r}(i)+ (\mu_1-\mu_2)C_{2r}(i) \nonumber \\ &\geq& \max \{\Lambda_1(i),\Lambda_2(i)\}  \IEEEyesnumber \IEEEyessubnumber \\
\Lambda_6(i) &=& \mu_2 C_{r1}(i) + \mu_1 C_{r2}(i)\geq \max \{\Lambda_4(i),\Lambda_5(i)\}. \IEEEeqnarraynumspace \IEEEyessubnumber
\end{IEEEeqnarray}
The expressions in (\ref{M25}a) and  (\ref{M25}b) hold with equality with non-zero probability if and only if $\mu_1=1$ and $\mu_2=0$, respectively, otherwise only the inequality holds, respectively. Therefore, we consider the following four possibilities for the thresholds $\mu_1$ and $\mu_2$.

\noindent
\textbf{a)} If $0\Less\mu_1 \Less 1$ and $0\Less\mu_2 \Less 1$, then, (\ref{M25}) holds with inequality and therefore, only $\mathcal{M}_3$ and $\mathcal{M}_6$ are selected. Hence, for this case, we can set $\mathcal{I}_k(i)=0,\,\, k=1,2,4,5$ and $\mathcal{I}_k(i)=1,\,\,k=3,6$ for $\forall i$. Since by simultaneously increasing $\mu_1$ and $\mu_2$,  the left hand side of (\ref{EquCons}) decreases and  the right hand side increases, we can always find a pair $(\mu_1,\mu_2)$ which satisfies this constraint. Therefore, the necessary condition for the optimality in this case is that among the candidates $(\mu_1,\mu_2)$ that satisfy (\ref{EquCons}), there exists a point which satisfies either $\mathrm{C1}$ or $\mathrm{C2}$ in (\ref{FixProbMin}) (we note that since we assume (\ref{EquCons}) holds, if $\mathrm{C1}$ holds, $\mathrm{C2}$ must hold too and vice versa). Moreover, if there exists such a point then it is unique. To prove this, note that the candidates $(\mu_1,\mu_2)$ that satisfy (\ref{EquCons}) form a monotonically decreasing curve in the plain of $(\mu_1,\mu_2)$, since by increasing both $\mu_1$ and $\mu_2$, we simultaneously decrease  the left hand side and increase the right hand side of (\ref{EquCons}). Moreover, increasing $\mu_1$ and decreasing  $\mu_2$ results in decreasing $\bar{R}_{1r}$ and $\bar{R}_{r1}$, and increasing $\bar{R}_{2r}$ and $\bar{R}_{r2}$. Therefore, there exists a unique $(\mu_1,\mu_2)$ which satisfies $\mathrm{C1}$ or $\mathrm{C2}$ in (\ref{FixProbMin}), otherwise, either $\mu_1$ or $\mu_2$ reaches a boundary value, i.e., zero or one, which is discussed in the following.

\noindent
\textbf{b)} If $\mu_1 = 1$ and $0<\mu_2<1$, then from (\ref{MET}), we obtain 
\begin{IEEEeqnarray}{rll}
    \Lambda_2(i) &=& \Lambda_3(i) =(1-\mu_2)C_{2r}(i)  \geq \Lambda_1(i) \IEEEyesnumber\IEEEyessubnumber \\
\Lambda_6(i) &=& \mu_2 C_{r1}(i)+C_{r2}(i)\geq \max \{\Lambda_4(i),\Lambda_5(i)\}. \IEEEyessubnumber
\end{IEEEeqnarray}
The probability that the above expressions hold with equality is zero. Therefore, we can set  $\mathcal{I}_k(i)=0,\,\,k=1,4,5$ and $\mathcal{I}_6(i)=1$, $\forall i$. Moreover, we observe that constraint $\mathrm{C2}$ in (\ref{FixProbMin}), i.e., $\bar{R}_{2r}=\bar{R}_{r1}$, only depends on $\mu_2$, and as $\mu_2$ decreases, $\bar{R}_{2r}$ increases and $\bar{R}_{r1}$ decreases. Since if $P_2\leq P_r$, we obtain $\Lambda_2(i)=\Lambda_3(i)<C_{2r}(i)\leq C_{r2}(i)<\Lambda_6(i)$, a sufficient and necessary condition for the existence of a $\mu_2$ such that   constraint $\mathrm{C2}$ in (\ref{FixProbMin}) holds is $P_2 > P_r$. Since both thresholds are obtained, we could determine when to select mode $\mathcal{M}_6$ or $\{\mathcal{M}_2,\mathcal{M}_3\}$. Next, we need to determine the optimal policy to select between $\mathcal{M}_2$ and $\mathcal{M}_3$ if for some channel realizations, we obtain $ \Lambda_2(i) = \Lambda_3(i) > \Lambda_6(i)$ such that constraint $\mathrm{C1}$ in (\ref{FixProbMin}) holds and the maximum sum rate is achieved. Since the maximum sum rate is fixed to $E\left\{q_6(C_{r1}+C_{r2})\right\}$ and it is only a function of $\mu_2$, different strategies of satisfying $\mathrm{C1}$ in (\ref{FixProbMin}) yield the same sum rate. Thus, we consider a simple probabilistic approach for choosing ${\cal M}_3$ with probability $p_1$ and ${\cal M}_2$ with probability $1-p_1$. To implement the probabilistic strategy, we set $\mathcal{I}_3(i)=1-\mathcal{I}_2(i)=X_1(i)$ where $X_1(i)$ is the outcome of a coin flip with probability $p_1=\frac{E\{q_6C_{r2}\}}{E\left\{(1-q_6)[C_r-C_{2r}]\right\}}$ where $p_1$ is chosen such that constraint $\mathrm{C1}$ in (\ref{FixProbMin}) is satisfied.

\noindent
\textbf{c)} If $0<\mu_1 <1$ and $\mu_2=0$, then from (\ref{MET}), we obtain 
\begin{IEEEeqnarray}{rll}
    \Lambda_3(i) &=& (1-\mu_1)[C_r(i)-C_{2r}(i)]+C_{2r}(i) \nonumber \\ 
&\geq& \max \{\Lambda_1(i),\Lambda_2(i)\} \IEEEyesnumber\IEEEyessubnumber \\
\Lambda_5(i) &=& \Lambda_6(i) = \mu_1 C_{r2}(i) \geq \Lambda_4(i).  \IEEEyessubnumber
\end{IEEEeqnarray}
Similar to Case b, the probability that the above expressions hold with equality is zero. Therefore, we set $\mathcal{I}_k(i)=0,\,\,k=1,2,4,\mathcal{I}_3(i)=1$, and $\mathcal{I}_6(i)=1-\mathcal{I}_5(i)=X_2(i)$ with $p_2=\frac{E\{q_3C_{2r}\}}{E\left\{(1-q_3)C_{r1}\right\}}$ for $\forall i$. Moreover, the necessary condition for this case is that $P_2 < P_r$ and the sum rate is $E\{q_3 C_r\}$.

\noindent
\textbf{d)} If $\mu_1 = 1$ and $\mu_2=0$, then from (\ref{MET}), we obtain 
\begin{IEEEeqnarray}{rll}
    \Lambda_2(i) &=& \Lambda_3(i) = C_{2r}(i)  \geq \Lambda_1(i)  \IEEEyesnumber\IEEEyessubnumber \\
\Lambda_5(i) &=& \Lambda_6(i) = C_{r2}(i)  \geq \Lambda_4(i).  \IEEEyessubnumber
\end{IEEEeqnarray}
The probability that the above expressions hold with equality is zero and therefore, we can set $\mathcal{I}_k(i)=0, \,\, k=1,4$. Moreover, one can easily conclude that the necessary condition for this case is $P_2=P_r$, otherwise, only one of $\{{\cal M}_2,{\cal M}_3\}$ or $\{{\cal M}_5,{\cal M}_6\}$ is selected in all  time slots which violates constraints $\mathrm{C1}$ and $\mathrm{C2}$ in (\ref{FixProbMin}). Thus, we obtain $\Lambda_2(i)=\Lambda_3(i)=\Lambda_5(i)=\Lambda_6(i), \,\,\forall i$. Moreover, considering that $\mathrm{C1}$ and $\mathrm{C2}$ in (\ref{FixProbMin}) hold, the sum rate is $\bar{R}_{2r}+\bar{R}_{r2}= E\{(q_2+q_3)C_{2r}\}+E\{(q_5+q_6)C_{r2}\}=E\{C_{2r}\}=E\{C_{r2}\}$. Moreover, the maximum sum rate is achieved by a probabilistic approach as follows
\begin{IEEEeqnarray}{rll}
    {\begin{cases}
\mathcal{I}_2(i)= X_3(i) [1-X_4(i)]\\
\mathcal{I}_3(i)= X_3(i) X_4(i) \\
\mathcal{I}_5(i)=[1-X_3(i)] [1-X_5(i)]\\
\mathcal{I}_6(i)=[1-X_3(i)] X_5(i)
\end{cases}} \IEEEyesnumber
\end{IEEEeqnarray}
where the coin flip probabilities $p_4=\frac{1-p_3}{p_3} \frac{E\{C_{r2}\}}{E\{C_{r}-C_{2r}\}}$ and $p_5=\frac{p_3}{1-p_3} \frac{E\{C_{2r}\}}{E\{C_{r1}\}}$ are chosen such that constraints  $\mathrm{C1}$ or $\mathrm{C2}$ in (\ref{FixProbMin}) are satisfied, respectively, and probability $p_3$ is chosen to achieve the optimal sum rate. However, for the assumed values of $p_4$ and $p_5$, the sum rate is $\bar{R}_{12}+\bar{R}_{21}=E\{C_{2r}\}=E\{C_{r2}\}$ which is indeed the maximum sum rate. Therefore, any value of $p_3$ which results in an acceptable value of $p_4$ and $p_5$, i.e., $0\leq p_4,p_5\leq 1$, is optimal. From this condition, we obtain $\frac{E\{C_{r2}\}}{E\{C_{r}\}} \leq p_3 \leq \frac{E\{C_{r1}\}}{E\{C_{r1}+C_{2r}\}}$, where $p_3 = \frac{E\{C_{r2}\}}{E\{C_{r}\}}$ and $p_3 = \frac{E\{C_{r1}\}}{E\{C_{r1}+C_{2r}\}}$ lead to $p_4=1$ and $p_5=1$, respectively.

\noindent
\textbf{Case 3:} If $t(i)=1$, then considering (\ref{Complementary Slackness}d),  we obtain $\phi_0(i)=0$ and from (\ref{Dual Feasibility Condition}c), we obtain $\phi_1(i)\geq 0$. Combining these results with (\ref{Stationary t}), we obtained $\mu_1\leq\mu_2$. Following a similar procedure as for Case 2, we obtain similar results.

\subsection{SNR Regions}

We obtained different selection policies and the necessary conditions of optimality for each of them. Now, we determine which of the derived selection policies will be used based on the statistics of the channels and the powers of the nodes. To this end, we define three mutually exclusive SNR regions $\mathcal{R}_1,\mathcal{R}_2,$ and $\mathcal{R}_0$. Each of the necessary conditions is realized only in one of these SNR regions and since the SNR regions are mutually exclusive and the necessary conditions are mutually exclusive too, the selection policy for each necessary condition is optimal for the corresponding SNR region. 

First, we observe that the multiple access and broadcast modes are selected with non-zero probability in all the discussed cases. However, the point-to-point modes are selected only when thresholds $\mu_1$ and $\mu_2$ have a boundary value, i.e., zero or one. By this observation, we define the SNR regions $\mathcal{R}_1,\mathcal{R}_2$ and $\mathcal{R}_0$ based on the utilization of point-to-point modes as follows:

\noindent
\textbf{SNR Region $\mathcal{R}_0$:} In this SNR region, none of the point-to-point modes are selected. Moreover, the necessary conditions for the thresholds are $0<\mu_1<1$ and $0<\mu_2<1$.

\noindent
\textbf{SNR Region $\mathcal{R}_1$:} In this SNR region, one or both of point-to-point modes $\mathcal{M}_2$ and $\mathcal{M}_5$ are selected. Moreover, the necessary condition for the thresholds is $\mu_1>\mu_2$ where at least one of the thresholds has a boundary value. Specifically, in this SNR region, if $P_2>P_r$, we obtain $\mu_1=1$ and $0<\mu_2<1$ and mode $\mathcal{M}_2$ is selected in some time slots, whereas if $P_2<P_r$, we obtain $0<\mu_1<1$ and $\mu_2=0$ and mode $\mathcal{M}_5$ is selected in some time slots, and finally if $P_2=P_r$, we obtain $\mu_1=1$ and $\mu_2=0$ and both modes $\mathcal{M}_2$ and $\mathcal{M}_5$ can be selected in some time slots. 

\noindent
\textbf{SNR Region $\mathcal{R}_2$:} In this SNR region, one or both of the point-to-point modes $\mathcal{M}_1$ and $\mathcal{M}_4$ is selected. Moreover, the necessary condition for the thresholds is $\mu_1<\mu_2$ where at least one of the thresholds has a boundary value. Similar to SNR region $\mathcal{R}_1$, in this SNR region, if $P_1>P_r$, we obtain $0<\mu_1<1$ and $\mu_2=1$ and mode $\mathcal{M}_1$ is selected in some time slots, whereas if $P_1<P_r$, we obtain $\mu_1=0$ and $0<\mu_2<1$ and mode $\mathcal{M}_4$ is selected in some time slots, and finally if $P_2=P_r$, we obtain $\mu_1=0$ and $\mu_2=1$ and both modes $\mathcal{M}_1$ and $\mathcal{M}_4$ can be selected in some time slots.  

Roughly speaking, in SNR region $\mathcal{R}_1$, the user 1-to-relay link is much stronger than the user 2-to-relay link, and one or both of the point-to-point modes between the relay and user 2 are selected. Similarly, in SNR region $\mathcal{R}_2$, the user 2-to-relay link is much stronger than the user 1-to-relay link, and one or both of the point-to-point modes between the relay and user 1 are selected. However, in SNR $\mathcal{R}_0$, the quality of user 1-to-relay link is similar to that of the user 2-to-relay link, and none of the point-to-point modes is selected. To easily distinguish the SNR regions based on the channel statistics, we investigate the transition conditions between the regions. We note that as the quality of the user 1-to-relay link improves compared to the user 2-to-relay link, the SNR region changes from $\mathcal{R}_2$ to $\mathcal{R}_0$ and then from $\mathcal{R}_0$ to $\mathcal{R}_1$. Therefore, the transition condition between  SNR regions $\mathcal{R}_1$ and $\mathcal{R}_0$ occurs when the coin flip probabilities are equal to one, i.e., $p_1,p_2,p_4,p_5\overset{ \mathcal{R}_0}{\underset{ \mathcal{R}_1}{\gtrless}} 1$. Then, based on whether $P_r$ is larger, smaller or equal to $P_2$, we use the optimal strategy introduced in Case b, c, and d, respectively. Therefore, the SNR region is $\mathcal{R}_1$ if and only if
\begin{IEEEeqnarray}{lll}\label{Region1}
    &\left\{ P_2>P_r  \,\,  \mathrm{AND}  \,\, \omega_1<1\right\} \nonumber\\
 \mathrm{OR}  \,\,&\left\{ P_2<P_r  \,\,  \mathrm{AND}  \,\, \omega_2<1\right\} \nonumber\\
 \mathrm{OR}  \,\,&\{P_2=P_r \,\,  \mathrm{AND}  \,\, \omega_3^{l} <\omega_3^{u} \} \quad \IEEEyesnumber 
\end{IEEEeqnarray}
where $\omega_1=\frac{E\{qC_{r2}\}}{E\left\{(1-q)[C_r-C_{2r}]\right\}}, \omega_2=\frac{E\{qC_{2r}\}}{E\left\{(1-q)C_{r1}\right\}}, \omega_3^{l} = \frac{E\{C_{r2}\}}{E\{C_{r}\}}$, and $\omega_3^{u}=\frac{E\{C_{r1}\}}{E\{C_{r1}+C_{2r}\}}$ and $q(i)$ is defined as
\begin{IEEEeqnarray}{rl}
   q(i)=\begin{cases}
1, &\mathrm{if}\,\,\,\,\, \left\{ P_2>P_r \,\, \mathrm{AND}\,\, \big[\Lambda_3 \leq \Lambda_6\big]_{\mu_1=1,\mu_2=\mu_2^*}^{t(i)=0,\,\,\forall i} \right\}\\  &\mathrm{OR} \left\{ P_2<P_r \,\, \mathrm{AND}\,\, \big[\Lambda_3 \geq \Lambda_6\big]_{\mu_1=\mu_1^*,\mu_2=0}^{t(i)=0,\,\,\forall i} \right\}
\\0,  & \mathrm{otherwise}
\end{cases} \IEEEeqnarraynumspace
\end{IEEEeqnarray}
where if $P_2>P_r$, the threshold $\mu_2^*$ is chosen such that $E\{(1-q)C_{2r}\}=E\{qC_{r1}\}$ holds, whereas, if $P_2<P_r$, the threshold $\mu_1^*$ is chosen such that $E\{q[C_r-C_{2r}]\}=E\{(1-q)C_{r2}\}$ holds. In a similar manner, one can distinguish between SNR region $\mathcal{R}_2$ and $\mathcal{R}_0$. We note that all conditions of optimality  for these three exclusive SNR regions are obtained in this section and the thresholds $\mu_1$ and $\mu_2$ which resulted in the maximum sum rate are shown to be unique. Thus the proposed solution is optimal. This completes the proof.


\section{Proof of Binary Relaxation}
\label{AppBinRelax}

In this appendix, we prove that the optimal solution of the problem with the relaxed constraints, $0\leq  q_k(i)\leq 1$, can also be achieved if the values of $q_k(i)$ are binary, i.e., if $q_k(i)\in\{0, 1\}$. Therefore, the binary relaxation does not change the maximum sum rate. We start with noting that if one of the $q_k(i),\,\,k=1,\dots,6$, adopts a non-binary value in the optimal solution, then in order for constraint $\mathrm{C3}$ in (\ref{FixProb}) to be satisfied, there has to be at least one other non-binary selection variable in that time slot. Assuming that the mode indices of the non-binary selection variables are $k'$ and $k''$ in the $i$-th time slot, we obtain $\alpha_k(i)=0,\,\,k = 1,\dots,6$ from (\ref{Complementary Slackness}a), $ \beta_{k'}(i)=0$ and $ \beta_{k''}(i)=0$  from (\ref{Complementary Slackness}b). Then, by substituting these values into (\ref{Stationary Mode}), we obtain
\begin{IEEEeqnarray}{lll}\label{BinRelax}
    \lambda(i) = \Lambda_{k'}(i)  \IEEEyesnumber\IEEEyessubnumber  \\
    \lambda(i)= \Lambda_{k''}(i)\IEEEyessubnumber  \\
   \lambda(i)-\beta_3(i) =  \Lambda_k(i), \quad k\neq k', k''. \IEEEyessubnumber
\end{IEEEeqnarray}
From (\ref{BinRelax}a) and  (\ref{BinRelax}b), we obtain $\Lambda_{k'}(i)=\Lambda_{k''}(i)$ and by subtracting (\ref{BinRelax}a) and  (\ref{BinRelax}b) from (\ref{BinRelax}c), we obtain 
\begin{IEEEeqnarray}{rCl}
    \Lambda_{k'}(i) - \Lambda_k(i) &=& \beta_k(i), \quad \quad k\neq k', k'' \IEEEyesnumber \IEEEyessubnumber  \\
\Lambda_{k''}(i) - \Lambda_k(i) &=& \beta_k(i), \quad \quad k\neq k', k''. \IEEEyessubnumber
\end{IEEEeqnarray}
From the dual feasibility condition given in (\ref{Dual Feasibility Condition}b), we have $\beta_k(i)\geq 0$ which leads to $\Lambda_{k'}(i)=\Lambda_{k''}(i)\geq \Lambda_k(i)$. However, as a result of the randomness of the time-continuous channel gains, $\Pr\{\Lambda_{k'}(i)=\Lambda_{k''}(i)\} > 0$ holds for some transmission modes $\mathcal{M}_{k'}$ and $\mathcal{M}_{k''}$ if and only if the optimal value of either $\mu_1$ or $\mu_2$ is at a boundary, i.e., $\mu_1=0,1$ or $\mu_2=0,1$. In these cases, we observe that $\Lambda_{k'}(i)=\Lambda_{k''}(i)$, $\forall i$, and we have shown in Appendix \ref{AppKKT} that for $\mu_1=0,1$ and $\mu_2=0,1$, we can obtain the optimal sum rate by using a probabilistic approach to choose ${\cal M}_{k'}$ with probability $p$ and ${\cal M}_{k''}$ with probability $1-p$, which results in a binary value for the selection variables $q_{k'}(i)$ and $q_{k''}(i)$. This completes the proof.


\section{Threshold Regions}
\label{AppMURegion}

In this appendix, we determine the interval which contains the optimal values of $\mu_1, \mu_2$. Fig. \ref{FigMURegion} represents the set of candidate selection modes for all channel realizations in the space of $(\mu_1,\mu_2)$.  These candidate modes are obtained based on the necessary selection condition introduced in (\ref{OptMet}) as:
\begin{figure}
\centering
\psfrag{M1}[c][c][0.75]{$\mu_1$}
\psfrag{M2}[c][c][0.75]{$\mu_2$}
\psfrag{Mu1}[c][c][0.75]{$\mu_1 = 1$}
\psfrag{M21}[c][c][0.75]{$\,\,\mu_2=1$}
\psfrag{S1}[c][c][0.75]{$\big\{\mathcal{M}_1,\mathcal{M}_4\big\}$}
\psfrag{S2}[c][c][0.75]{$\big\{\mathcal{M}_3,\mathcal{M}_4\big\}$}
\psfrag{S3}[c][c][0.75]{$\big\{\mathcal{M}_3\big\}$}
\psfrag{S4}[c][c][0.75]{$\big\{\mathcal{M}_3,\mathcal{M}_5\big\}$}
\psfrag{S5}[c][c][0.75]{$\big\{\mathcal{M}_3,\mathcal{M}_6\big\}$}
\psfrag{S6}[c][c][0.75]{$\big\{\mathcal{M}_1,\mathcal{M}_6\big\}$}
\psfrag{S7}[c][c][0.75]{$\big\{\mathcal{M}_6\big\}$}
\psfrag{S8}[c][c][0.75]{$\big\{\mathcal{M}_2,\mathcal{M}_6\big\}$}
\psfrag{S9}[c][c][0.75]{$\big\{\mathcal{M}_2,\mathcal{M}_5\big\}$}
\includegraphics[width=3 in]{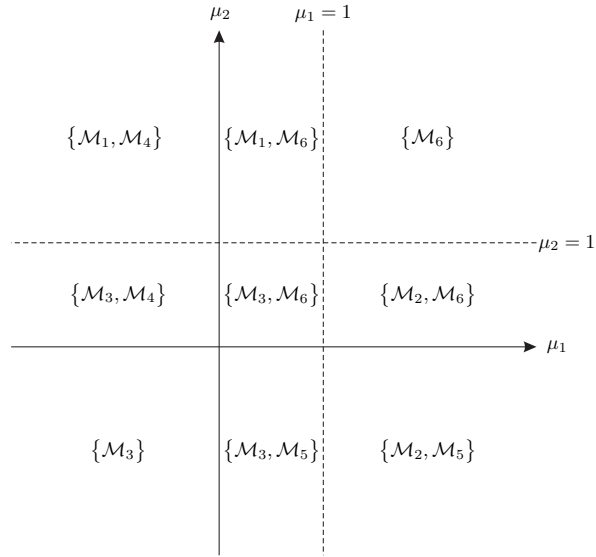}
\caption{Modes with non-negative powers in the space of $(\mu_1,\mu_2)$.}
\label{FigMURegion}
\end{figure}

\noindent
\textbf{1)} If $\mu_1< 1$ and $\mu_2> 1$, then $\mathcal{M}_1$ is a candidate for selection since we obtain $\Lambda_1(i)\geq \max\{\Lambda_2(i),\Lambda_3(i)\}$ from (\ref{MET}).

\noindent
\textbf{2)} If $\mu_1> 1$ and $\mu_2< 1$, then $\mathcal{M}_2$ is a candidate for selection since we obtain $\Lambda_2(i)\geq \max\{\Lambda_1(i),\Lambda_3(i)\}$ from (\ref{MET}).

\noindent
\textbf{3)} If $\mu_1< 1$ and $\mu_2< 1$, then, based on (\ref{MU12}a) and (\ref{M25}a), $\mathcal{M}_3$ is a candidate for selection since we obtain  $\Lambda_3(i)\geq \max\{\Lambda_1(i),\Lambda_2(i)\}$ from (\ref{MET}).

\noindent
\textbf{4)} If $\mu_1< 0$ and $\mu_2> 0$, then $\mathcal{M}_4$ is a candidate for selection since we obtain $\Lambda_4(i)\geq \max\{\Lambda_5(i),\Lambda_6(i)\}$ from (\ref{MET}).

\noindent
\textbf{5)} If $\mu_1> 0$ and $\mu_2< 0$, then $\mathcal{M}_5$ is a candidate  for selection since we obtain $\Lambda_5(i)\geq \max\{\Lambda_4(i),\Lambda_6(i)\}$ from (\ref{MET}).

\noindent
\textbf{6)} If $\mu_1> 0$ and $\mu_2> 0$, then $\mathcal{M}_6$ is a candidate for selection since we obtain $\Lambda_6(i)\geq \max\{\Lambda_4(i),\Lambda_5(i)\}$ from (\ref{MET}).

It is straightforward to see that all sets in Fig.~\ref{FigMURegion} contradict one or both constraints  $\mathrm{C1}$ and $\mathrm{C2}$ in (\ref{FixProb}), except $\{\mathcal{M}_3,\mathcal{M}_6\}$, which corresponds to $0\leq \mu_1 \leq 1$ and $0\leq \mu_2 \leq 1$. Note that if $\mu_1$ and $\mu_2$ are at the boundaries, i.e., $\mu_1=0,1$ and $\mu_2=0,1$, then the neighbor modes are also candidates for selection. 

We note that $\mu_1$ and $\mu_2$ cannot be simultaneously equal to zero since this results in $\Lambda_{4}(i)=\Lambda_{5}(i)=\Lambda_{6}(i)=0$, $\forall i$, and thus only the transmission from the users to the relay will be adopted and this contradicts constraints $\mathrm{C1}$ and $\mathrm{C2}$ in (\ref{FixProb}). Similarly,  $\mu_1$ and $\mu_2$ cannot be simultaneously equal to one since it also contradicts constraints $\mathrm{C1}$ and $\mathrm{C2}$ in (\ref{FixProb}).



\end{document}